\newcommand{\ket}[1]{\mathop{\big|#1\big>}\nolimits}            
\newcommand{\kbr}[2]{| #1\rangle\!\langle #2 |}
\newcommand{\Tr}[1]{\mathop{{\mathrm{Tr}}_{#1}}}
\newcommand{\diag}{\mathop{{\mathrm{diag}}}}
\newcommand{\sldc}{{\mathsf{sl}(d,\mathbb{C})}}
\newcommand{\sltw}{\mathsf{sl}(2,\mathbb{C})}
\newcommand{\slthr}{\mathsf{sl}(3,\mathbb{C})}
\newcommand{\slfo}{\mathsf{sl}(4,\mathbb{C})}
\newtheorem{thm}{Theorem}
\newtheorem{lem}[thm]{Lemma}
\newtheorem{defi}[thm]{Definition}
\theoremstyle{remark}
\newtheorem*{rem}{Remark}
\newtheorem*{exa}{\bf Example}
\newtheorem{cor}[thm]{\bf Corollary}
\def\openone{\mathbb{1}}
\def\bbC{\mathbb{C}}
\def\bbR{\mathbb{R}}
\newcommand{\nn}{\nonumber}
\def\dg{\dagger}
\def\vcm{\vec m}
\def\vcn{\vec n}
\def\a{\alpha}
\def\b{\beta}
\def\g{\gamma}
\def\d{\delta}
\def\ve{\varepsilon}
\def\vr{\varrho}
\def\s{\sigma}
\def\la{\lambda}
\def\La{\Lambda}
\def\B{\mathcal{B}}
\def\G{\mathcal{G}}
\def\O{\mathcal{O}}
\def\D{\mathcal{D}}
\def\K{\mathcal{K}}
\def\H{\mathcal{H}}
\def\M{\mathcal{M}}
\def\T{\mathcal{T}}
\def\U{\mathcal{U}}
\begin{document}

\title{Capacities of Grassmann channels}

\author{Kamil Br\'adler}
 \email{kbradler@cs.mcgill.ca}
 \affiliation{
    School of Computer Science,
    McGill University,
    Montreal, Quebec, H3A 2A7, Canada
    }
\author{Tomas Jochym-O'Connor}
 \affiliation{
    School of Computer Science,
    McGill University,
    Montreal, Quebec, H3A 2A7, Canada
    }
\affiliation{
    Institute for Quantum Computing and Department of Physics and Astronomy,
    University of Waterloo,
    200 University Avenue West, Waterloo, ON, N2L 3G1, Canada
    }
\author{Roc\'io J\'auregui}
\affiliation{
    Instituto de F\'isica, Universidad Nacional Aut\'onoma de M\'exico, Apdo. Postal 20-364, M\'exico D.F. 01000
    }

\date{April 28, 2011}

\begin{abstract}
A new class of quantum channels called Grassmann channels is introduced and their classical and quantum capacity is calculated. The channel class appears in a study of the two-mode squeezing operator constructed from operators satisfying the fermionic algebra. We compare Grassmann channels with the channels induced by the bosonic two-mode squeezing operator. Among other results, we challenge the relevance of calculating entanglement measures to assess or compare the ability of bosonic and fermionic states to send quantum information to uniformly accelerated frames.
\end{abstract}

\maketitle

\section{Introduction}
The notion of quantum channel capacity is central to quantum Shannon theory. Early development in the seventies~\cite{early} was a starting point to an impressive amount of knowledge that has been acquired in the last two decades~\cite{todo,era_priv}. Two of the most investigated areas is the classical~\cite{classcap} and quantum capacity~\cite{LSD} of a quantum channel. The classical/quantum capacity informs us about the ability of a quantum channel to transmit classical or quantum correlations. More precisely, consider a sender who has, in principle, at his disposal an optimal encoder producing a classical or quantum code, and a receiver able to process the channel output and recover the transmitted information (that is, to decode) with arbitrarily high precision. In this way, the information can be transmitted at the rate given by the capacity and cannot be improved by any other choice of encoding. Various additional conditions or restrictions might be added, for instance if privacy is required~\cite{privacy} or some sort of assistance in terms of other quantum or classical resources available to the communicating parties~\cite{sym_ass,ent_ass}. This leads to a large number of important capacity definitions relevant under given circumstances and one might even try to characterize the whole capacity multi-dimensional regions in which the axes correspond to various available resources~\cite{capregion,capregion_had,capregion_qudits}.

Due to the presence of regularization (see below) the classical or quantum capacity is not efficiently computable. There are, however, particular examples of channels for which the classical or quantum capacity is easy to calculate. In the case of the classical capacity, every such example must be cherished since the proof usually involves some nontrivial manipulations~\cite{classcap_additivity,classcap_additivity_clone,hadamard,entbreak,TransDepol,WolfEisert}. For the quantum capacity, almost all known non-trivial examples fall in the class of degradable channels~\cite{degchan,erasure_channel,giovannetti}. Among these examples are exceptional cases for which both capacities are known, to our knowledge there are only two examples: (i) a qubit erasure channel~\cite{erasure_channel} and (ii) Hadamard channels~\cite{hadamard}. There are also trivial examples of such channels with zero quantum capacity: entanglement-breaking channels~\cite{entbreak} and anti-degradable channels. In this paper we add another member into the elite group of non-trivial examples: the Grassmann channels. In order to find the quantum capacity we show that the Grassmann channels are degradable while to find the classical capacity we make use of the fact that the Grassmann channels are of a direct sum form. We show that the channels which form the structure of the Grassmann channels are new members of the surprisingly broad family of channels studied in~\cite{WolfEisert} for which the classical capacity is efficiently calculable. The lowest-dimensional example of the Grassmann channels turns out to be a qubit erasure channel. In general, higher-dimensional Grassmann channels have certain traits in common with a qubit erasure channel making them interesting in the light of some recent capacity results~\cite{era_ass,era_priv}.

This paper can be partially seen as an accompanying paper to Ref.~\cite{CMP}. There, one of the authors investigates a channel and its capacities induced by the bosonic squeezing transformation, playing the
role of a channel isometry. The isometry is presented from the physical point of view as it appears in the context of the Unruh effect for massless scalar fields. The channel was nicknamed the Unruh channel and its properties and analysis from the quantum Shannon theory point of view were also presented in other works~\cite{JHEP,capregion_had,classcap_additivity_clone,ConjDeg}. The current paper is the `fermionic' version of the analysis done in the bosonic case, in the sense that the isometry is generated by operators obeying the canonical anticommutation relations (the fermionic algebra). The consequences for quantum Shannon theory, mentioned in the previous paragraph, are radically different from the bosonic case which justifies an in-depth study of this fermionic case. For example, we show that the Grassmann channels do not belong to the class of Hadamard channels, as opposed to the bosonic case~\cite{capregion_had}. In terms of physical consequences, the capacity calculations for the Grassmann channels are even more surprising. In contrary to common opinion, we show that: (i) it makes absolutely no qualitative difference for the study of the Unruh effect whether the states are composed of bosons or fermions (at least for quantum information transmission purposes) and (ii) naive calculations of certain entanglement measures of states shared between inertial and uniformly accelerated observers do not provide much insight into the channel's ability to reliably send quantum information. Finally, there is an intriguing connection of the Grassmann channels and their bosonic relatives to the well studied family of transpose-depolarizing channels.

The encounter with fermionic degrees of freedom brings an interesting complication. Due to the use of fermionic statistics, the usual objects studied in quantum information theory, such as entangled or separable states, need to be treated carefully.  As we discuss later in more detail, the problem lies in the fact that many-particle fermionic systems lack the tensor product structure. This is one of the simplest examples of the braided statistics and has relatively recently begun to be studied more closely by a number of authors~\cite{fermions}. Due to the presence of tensor products in all capacity definitions, it is not immediately clear how to generalize this concept to a fermionic system, or even a system with more general statistics. We will show that in our case, however, we may use the `standard' framework for qubits (qudits) due to a careful choice of the input encoding for fermionic states and the specific isometry that we investigate in this work. Throughout this paper we use the multi-rail encoding and we will show that under given circumstances the capacity formulas indeed need not be modified. Thus, we may proceed and practice quantum Shannon theory with no modifications.

The connection of quantum capacities and Grassmann variables has previously been briefly visited also in a different context by the authors of~\cite{grass_cap} elaborating on the formalism introduced in~\cite{cahglaub}.

The paper is organized as follows. In Sec.~\ref{sec:def} we recall some basic notions from quantum Shannon theory together with the fermionic algebra and briefly discuss various relevant physical situations. Sec.~\ref{sec:grassconstruction} describes the construction of the qudit Grassmann channels and study some of their properties. Sections~\ref{sec:quantcap} and~\ref{sec:classcap} contain the calculation of the quantum and classical capacity of the whole (infinite-dimensional) class of Grassmann channels. In Sec.~\ref{sec:consequences} we discuss several physical properties of the Grassmann channels.

\section{Definitions and notation}\label{sec:def}

\subsection*{Classical and quantum capacity}
Let $\B(\H)$ be the algebra of bounded operators on a Hilbert space $\H$ which for our purposes will be a $d$-dimensional complex vector space $\H=\bbC^d$. Our algebra is therefore just the algebra of complex $d$-dimensional matrices. Let $\K:\B(\H_{A'})\mapsto\B(\H_{A'})$ be a quantum channel mapping density operators from an input Hilbert space $\H_{A'}$ to an output Hilbert space $\H_A$ (if there is no chance of confusion we will use the shorthand $\K:A'\mapsto A$). We define $V_\K:A'\mapsto AC$ to be the channel's isometric extension. Equivalently, by embedding the isometry into a higher-dimensional Hilbert space we write $U_\K:A'C'\mapsto AC$, where $U_\K$ is a unitary operator. Using the isometry picture the complementary channel of $\K$ is defined as $\K^c(\s)=\Tr{A}[V_\K\s V^\dg_\K]\equiv\Tr{A}\circ\ V_\K\circ\s$. The second equality is how we will occasionally abbreviate similar expressions. When there is no chance of confusion we will omit the mode index $A$ or $C$ for states. The von Neumann entropy $H(\vr_A)=-\Tr{}[\vr_A\log{\vr_A}]$ will be written in the economic way as $H(A)_\vr$ which is mainly suitable for dealing with the parts of multipartite states $\vr_A=\Tr{BC\dots}[\vr_{ABC\dots}]$. This convention will also be used for other entropic quantities.

A classical ensemble  can be written in the form of a classical-quantum state $\s_{XA'}=\sum_xp(x)\kbr{x}{x}_X\otimes\s_{x,A'}$ with a probability distribution function $p(x)$. Then, for a quantum channel $\K$ the capacity is given by the regularized expression~\cite{classcap}
\begin{equation}\label{eq:classcap}
C(\K)=\lim_{n\to\infty}{1\over n}{C_{\rm Hol}(\K^{\otimes n})}.
\end{equation}
$C_{\rm Hol}(\K)$ is the Holevo quantity
\begin{equation}\label{eq:holcap}
C_{\rm Hol}(\K)=\sup_{\{p(x),\s_{x,A'}\}}{I(X:A)}_{\varsigma},
\end{equation}
where $\varsigma_{XA}=\K(\s_{XA'})$ and $I(X:A)=H(X)+H(A)-H(XA)$ is the mutual information. For the optimization task in Eq.~(\ref{eq:holcap}) it is sufficient to consider $\s_{x,A'}$ to be pure states.

The expression for the quantum capacity contains regularization as well~\cite{LSD}
\begin{equation}\label{eq:quantcap}
Q(\K)=\lim_{n\to\infty}{{1\over n}Q^{(1)}(\K^{\otimes n})}.
\end{equation}
$Q^{(1)}(\K)$ is the optimized coherent information
\begin{equation}\label{eq:optimcoh}
Q^{(1)}(\K)=\sup_{\psi}{[H(A)_\tau-H(C)_\tau}],
\end{equation}
where $\tau_{AC}=V_\K\circ\psi_{A'}$.

There exists a generalization of the von Neumann entropy known as the $\a$-R\'enyi entropy $H^\a(\rho)\overset{\rm df}{=}1/(1-\a)\log{[\Tr{}\rho^\a]}$, where the von Neumann entropy is recovered for $\a\to1$. A quantum channel $\D$ is called degradable~\cite{degchan,degchan_struc} if there exists another channel $\M$ such that $\M\circ\D=\D^c$ holds. The channel $\M$ is called a degrading map. Finally, $\ln$ is the natural logarithm and $\log$ denotes the base $d$ logarithm unless stated otherwise. The motivation for using the base $d$ logarithm is a better way of comparing the capacities for the class of channels studied here and the their `bosonic' equivalent studied elsewhere~\cite{CMP}.

\subsection*{Fermions and their correlated pairs}

For a set of modes that are specified by quantum numbers compactly denoted by $\kappa$, the exchange characteristics
of indistinguishable fermions that may occupy those modes are reflected in the anticommutation relations obeyed
by the corresponding creation (annihilation) operators $a_\kappa$  ($a_\kappa^\dg$):
\begin{align}\label{eq:anticomm}
\{a_\kappa,a^\dg_{\kappa^\prime}\}&=\d(\kappa-\kappa^\prime),\\
\{a^\dg_\kappa,a^\dg_{\kappa^\prime}\}&=\{a_\kappa,a_{\kappa^\prime}\}=0.
\end{align}
An interesting transformation that preserves the anticommutation relations results from the following algorithm.
Within the available modes two subsets are chosen. Each of the
modes from the subset of lower cardinality is related through an
injective function $f(\kappa)=\kappa^\prime$ to the elements of the other subset.
In this way, the modes are paired and so are the creation-annihilation operators
$a_\kappa, a^\dg_\kappa$ and $a_{f(\kappa)}, a^\dg_{f(\kappa)}$. In order to make simpler
the notation, the operators $a_{f(\kappa)}, a^\dg_{f(\kappa)}$ from now on will
be denoted by a different letter, for instance, $a_{f(\kappa)}, a^\dg_{f(\kappa)}\rightarrow
c_\kappa,c^\dg_\kappa$. It can be directly shown that the so called Bogoliubov transformations
\begin{subequations}
\begin{align}
b_\kappa &= \cos{r}a_\kappa - e^{-i\phi}\sin{r}c^\dg_\kappa\label{eq:bogo1a}\\
b_\kappa^{\dg} &= \cos{r}a^\dg_\kappa -e^{i\phi} \sin{r}c_\kappa\label{eq:bogo1b}\\
d_\kappa &= \cos{r}c_\kappa + e^{-i\phi}\sin{r}a^\dg_\kappa\label{eq:bogo1c}\\
d_\kappa^{\dg} &= \cos{r}c^\dg_\kappa + e^{i\phi}\sin{r}a_\kappa.\label{eq:bogo1d}
\end{align}
\end{subequations}
preserve the anticommutation relations Eq.~(\ref{eq:anticomm}).
These transformations were first introduced for the generation of the Bardeen-Cooper-Schreiffer (BCS)
states, which are an excellent approximation to the ground
state of a weakly interacting superconductor~\cite{Schrieffer}.
In that case, the modes are thought to describe electrons with the quantum numbers
usually taken as the vector wave number $\vec k$ and the projection
of the spin $s$. The function $f$ is such that $f(\vec k,s) = (-\vec k,-s)$.
The Cooper pairs that are described using the operators $b_\kappa$, $d_\kappa$, as a consequence,
have opposite momenta and spin.

Bogoliubov transformations have been very useful in the
description of strongly correlated fermions in diverse scenarios like
condensed matter~\cite{condmat}, ultracold degenerate atomic Fermi gases~\cite{atoms}
or quantum field theory of particles on spacetime with a nontrivial metric~\cite{rocio,curved,unruh-fermi,unruh-schwarz}.
To illustrate the impact of these transformations in the latter area, consider negligible interacting massive fermions.  For a uniformly accelerated observer, the adequate spacetime coordinates are the Rindler ones;
for which two wedges, that can not be causally connected, are identified, we shall denote them by right $R$ and left $L$ wedges.
The right (left) Rindler modes are only supported on the right (left) wedge. An important question concerns to the energy spectra seen by a Rindler observer in connection to the Unruh effect~\cite{unruh76}.
The most clear treatments of the problem rely on comparisons between the so-called Unruh  modes and the Rindler modes.
They are naturally carried out in terms of generalized Bogoliubov transformations~\cite{rocio} that take into account that each Unruh mode needs for its representation an infinite superposition of Rindler modes.
In this kind of scenario, perhaps the simplest realization of a Bogoliubov transformation corresponds to  the case of a spin one-half fermion moving in a one dimensional space. Then, the modes of each fermion can be described in terms of a Grassmann field and a basic transformation is
\begin{equation}\label{eq:bogo2}
\begin{pmatrix}
  a^{(R)}_{k,s} \\
  \bar a^{(L)^\dg}_{-k,-s} \\
\end{pmatrix}=
\begin{pmatrix}
  \cos{r} & -e^{-i\phi}\sin{r} \\
  e^{i\phi}\sin{r} & \cos{r} \\
\end{pmatrix}
\begin{pmatrix}
  b_{k,s} \\
  \bar b_{-k,-s}^\dg \\
\end{pmatrix}.
\end{equation}
where $a,a^\dg$ stand for particles and $\bar a,\bar a^\dg$ for antiparticles, $k$ denotes the wave number and $s$ its spin projection. The real parameter $r$ can be  chosen to depend on the acceleration $\a$ and the rest mass $m$ of the Rindler observer as $\tan{r} = e^{-{\pi c^2\over\a}\sqrt{k^2+(mc/\hbar)^2}}$ \cite{unruh-fermi}.  Note that Eq.~(\ref{eq:bogo2}) is Eqs.~(\ref{eq:bogo1a}) and (\ref{eq:bogo1d}) written in the matrix form and with the notation adjusted to the relevant physical situation.

\section{Grassmann channels}\label{sec:grassconstruction}

\begin{defi}\label{defi:fermstates}
    We define a $d$-mode fermionic state as
    $$
    \ket{F}=\prod_{i=1}^d(a_i^\dg)^{n_i}\ket{vac}=\ket{n_1\dots n_d}\equiv\ket{\vec n},
    $$
    where $n_i\in\{0,1\}$. Following the properties of the fermionic operators we see that (i) each mode is occupied by at most one particle and (ii) the state $\ket{F}$ is completely antisymmetric. For $0\leq k\leq d$ there is ${d\choose k}$ possible fermionic states for which $\sum_{i=1}^{d}n_i=k$ holds.
\end{defi}
Note that throughout this article all input Hilbert spaces are spanned by the multi-rail basis.
\begin{defi}\label{defi:multirail}
    The multi-rail basis is defined as the set of all states $\ket{F}$ for which $\sum_{i=1}^d n_i=1$. The basis spans a $d$-dimensional Hilbert space and any fermionic state written in this basis is said to utilize the multi-rail encoding.
\end{defi}
The operator related to a Bogoliubov transformation~Eq.~(\ref{eq:bogo2}) over the $i$-th fermionic modes $a$ and $c$ reads
\begin{equation}\label{eq:exponential}
U_{A_iC_i} = \exp{\big[r(a_i^\dg c_i^\dg e^{-i\phi}-c_ia_i e^{i\phi})\big]},
\end{equation}
where $r, \phi\in\bbR$. The operator exponent may be factorized according to the following theorem.
\begin{thm}[\cite{disentangle}]\label{thm:disent}
    Let $J_+,J_-$ and $J_3$ be operators satisfying the commutation relations
    \begin{equation}\label{eq:comrels}
        \begin{split}
        [J_3,J_\pm] &= \pm J_\pm\\
        [J_+,J_-] &= 2J_3.
        \end{split}
    \end{equation}
    Then the following identity holds
    \begin{equation}
        e^{\la_+J_+ + \la_3J_3+\la_-J_-} = e^{\La_+J_+}e^{{\ln{\La_3}} J_3} e^{\La_-J_-},
    \end{equation}
    where
    \begin{equation}
        \begin{split}
        \La_\pm &= \frac{2\la_\pm\sinh{f}}{2f\cosh{f} -\la_3\sinh f}\\
        \La_3   &= \left(\cosh f - {\la_3\over 2f}\sinh f\right)^{-2}\\
        f       &= \big((\la_3/2)^2 +\la_-\la_+\big)^{1/2}.
        \end{split}
    \end{equation}
\end{thm}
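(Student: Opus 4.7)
The plan is to recognise the commutation relations~\eqref{eq:comrels} as those of $\sltw$ and verify the identity in the faithful two-dimensional fundamental representation
\begin{equation*}
J_+\mapsto\begin{pmatrix}0&1\\0&0\end{pmatrix},\quad J_-\mapsto\begin{pmatrix}0&0\\1&0\end{pmatrix},\quad J_3\mapsto\tfrac12\begin{pmatrix}1&0\\0&-1\end{pmatrix}.
\end{equation*}
In this representation both sides of the claimed identity become explicit $2\times 2$ matrices, and the whole task reduces to matching their entries.

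First I would compute the left-hand side. The matrix $M=\lambda_+J_++\lambda_3J_3+\lambda_-J_-$ is traceless with $\det(-M)=(\lambda_3/2)^2+\lambda_+\lambda_-=f^2$, so Cayley--Hamilton for traceless $2\times 2$ matrices gives
\begin{equation*}
e^M=\cosh f\,\openone+\frac{\sinh f}{f}\,M,
\end{equation*}
whose entries are $\cosh f\pm(\lambda_3/2f)\sinh f$ on the diagonal and $(\lambda_\pm/f)\sinh f$ off the diagonal.

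Next I would compute the right-hand side using that $J_\pm$ are nilpotent in this representation, so $e^{\Lambda_\pm J_\pm}=\openone+\Lambda_\pm J_\pm$, while the middle factor is the diagonal matrix $\mathrm{diag}(\Lambda_3^{1/2},\Lambda_3^{-1/2})$. Multiplying the three matrices produces a matrix whose $(2,2)$ entry is $\Lambda_3^{-1/2}$, whose off-diagonal entries are $\Lambda_\pm\Lambda_3^{-1/2}$, and whose $(1,1)$ entry equals $\Lambda_3^{1/2}+\Lambda_+\Lambda_-\Lambda_3^{-1/2}$. Equating $(2,2)$ entries and solving gives $\Lambda_3=(\cosh f-(\lambda_3/2f)\sinh f)^{-2}$; equating off-diagonal entries and dividing by $\Lambda_3^{-1/2}$ gives $\Lambda_\pm=2\lambda_\pm\sinh f/(2f\cosh f-\lambda_3\sinh f)$. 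The $(1,1)$ equation is then automatic because both matrices have determinant $1$, so no new condition arises.

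The one conceptual point I expect to be the main obstacle is justifying that an identity verified in a particular matrix representation lifts to the operator identity needed in~\eqref{eq:exponential}, where $J_\pm,J_3$ act on an infinite-dimensional Fock space. The clean way is to observe that after a Baker--Campbell--Hausdorff expansion, both sides are built from the generators using only iterated commutators dictated by~\eqref{eq:comrels}; since the two-dimensional representation is faithful on $\sltw$, an equality of such purely Lie-algebraic expressions in the fundamental representation forces the equality at the level of the Lie algebra itself and hence in every other representation. An equally rigorous analytic alternative is to rescale $\lambda_\pm\mapsto t\lambda_\pm$, $\lambda_3\mapsto t\lambda_3$, differentiate both sides with respect to $t$, and note that they satisfy the same first-order ODE system in $\Lambda_\pm(t),\Lambda_3(t)$ with common initial condition at $t=0$, so uniqueness of solutions concludes.
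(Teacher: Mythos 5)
Your proof is correct, but there is nothing in the paper to compare it against: Theorem~\ref{thm:disent} is imported verbatim from the cited reference (Barnett and Radmore) and the paper offers no proof of its own. Your argument is the standard one for such $\sltw$ disentangling identities: verify the identity in the faithful two-dimensional representation, where Cayley--Hamilton for a traceless $2\times 2$ matrix gives $e^M=\cosh f\,\openone+(\sinh f/f)M$ with $M^2=f^2\openone$ (note the small slip: for a $2\times2$ matrix $\det(-M)=\det M=-f^2$; the quantity you want is $-\det M=f^2$, which is what Cayley--Hamilton actually uses), and then match entries against the upper-triangular/diagonal/lower-triangular product on the right-hand side. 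Your matrix algebra checks out, including the observation that the $(1,1)$ entry is forced by unimodularity. The lifting step is the only place requiring care, and both of your justifications are sound: the BCH logarithms of both sides live in $\sltw$ itself (every iterated commutator closes back into the span of $J_\pm,J_3$), so faithfulness of the fundamental representation transports the identity to any representation in which the exponentials converge --- which is automatic here since the fermionic realization acts on a finite-dimensional Fock space --- and the rescaling/ODE-uniqueness argument gives an independent analytic route. In short, the proposal is a complete and essentially canonical proof of a result the paper merely cites.
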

In our case we have
\begin{equation}
    \begin{split}
    J_+ &= a^\dg c^\dg\\
    J_- &= ca = -ac\\
    J_3 &= {1\over2}\left(a^\dg a + c^\dg c -1\right).
    \end{split}
\end{equation}
This choice satisfies the commutation relations~in~Eqs.~(\ref{eq:comrels}). We get from Eq.~(\ref{eq:exponential}) $\la_\pm=\pm re^{\mp i\phi}$ and $\la_3=0$ leading to
\begin{equation}\label{eq:unitarydecomposed}
U_{A_iC_i}= \cos{r}\exp{[a_i^\dg c_i^\dg e^{-i\phi}\tan{r}]}\exp{[(a_i^\dg a_i+c_i^\dg c_i)\ln{\cos^{-1}{r}}]} \exp{[-c_ia_ie^{i\phi}\tan{r}]}.
\end{equation}
We might safely set $\phi=0$ since we will later see that it has no relevance in this work.

We collect the majority of identities used in the course of the paper in the following lemma.
\begin{lem}
    Let $a$ and $c$ be operators obeying the canonical anticommutation relations~Eq.~(\ref{eq:anticomm}). Then the following identities hold
    \begin{subequations}
    \begin{align}
        [a_ic_i,a^\dg_jc^\dg_j] &= 2\d_{ij}J_3\label{eq:comma}\\
        [a^\dg_ia_i,a^\dg_jc^\dg_j] &=  \d_{ij}a^\dg_jc^\dg_j\label{eq:commb}\\
        [a^\dg_ia_i,a^\dg_j] &= \d_{ij}a^\dg_j\label{eq:comanticom}\\
        [a_j^\dg c_j^\dg,a_i^\dg] &= 0      \label{eq:sigmaZXa}\\
        \bigg[\prod_{j=1}^ka_{j}c_{j},a_i^\dg\bigg] &=0      \label{eq:sigmaZXe}\\
        \bigg[\prod_{j=1}^ka^\dg_{j}c^\dg_{j},c_i^\dg\bigg] &=0  \label{eq:sigmaZXf}\\
        (-)^{\Delta_{k-1}}\prod_{j=1}^ka_jc_j &= \prod_{j=1}^ka_j\prod_{j=1}^kc_j\label{eq:acorder}.
    \end{align}
    \end{subequations}
    The first equation is a generalization of the second row in Eq.~(\ref{eq:comrels}), Eq.~(\ref{eq:comanticom}) simplifies to $a_i^\dg a_ia_i^\dg=a_i^\dg$ for $i=j$, Eq.~(\ref{eq:sigmaZXe}) holds for $i\not=j$ and $\Delta_{k-1}=k(k-1)/2$.
    \begin{proof}
        Eqs.~(\ref{eq:comma})-(\ref{eq:sigmaZXf}) directly follow from~Eq.~(\ref{eq:anticomm}). In Eq.~(\ref{eq:acorder}) we move all $a_j$ on the left of all $c_j$. So all $a_j$ on the LHS of Eq.~(\ref{eq:acorder}) for even $j$ `jump over' an odd number of $c_j$ operators. Similarly every odd $a_j$ switches its position with an even number of $c_j$ operators. The total acquired phase is
        $$
        \prod_{j=1}^k(-)^{(k-j)}=(-)^{\sum_{j=1}^k(k-j)}=(-)^{k(k-1)/2}\equiv(-)^{\Delta_{k-1}}.
        $$
    \end{proof}
\end{lem}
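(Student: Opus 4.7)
The plan is to dispatch identities (a)--(d) by direct expansion using the canonical anticommutation relations, then handle (e) and (f) by a one-line commutation-from-double-anticommutation argument, and finally prove (g) by counting transpositions.

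For (a), I would write $[a_ic_i, a^\dg_j c^\dg_j]$ using the bilinear commutator identity $[AB,CD] = A\{B,C\}D - \{A,C\}BD + CA\{B,D\} - C\{A,D\}B$ (with signs appropriate to the fermionic setting). All anticommutators vanish when $i\neq j$, giving zero; when $i=j$, the remaining terms collapse to $a^\dg a + c^\dg c - 1 = 2J_3$. Identity (b) is analogous, and (c) is just the standard number-operator relation $[a_i^\dg a_i,a_j^\dg]=\delta_{ij}a_j^\dg$. For (d), the case $i=j$ uses $(a_j^\dg)^2=0$, while $i\neq j$ follows because $a_i^\dg$ anticommutes separately with $a_j^\dg$ and with $c_j^\dg$, and two minus signs make a plus.

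For (e) with $i\notin\{1,\dots,k\}$, the key observation is that $a_i^\dg$ anticommutes with each $a_j$ and each $c_j$ appearing in the product, so $a_jc_j\, a_i^\dg = (-1)^2\, a_i^\dg a_jc_j = a_i^\dg a_jc_j$; iterating this through the whole product gives zero commutator. Identity (f) follows by the same two-anticommutations-make-a-commutation argument applied to $c_i^\dg$ against each $a_j^\dg c_j^\dg$.

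The only part requiring a small combinatorial computation is (g). I would rearrange $a_1c_1a_2c_2\cdots a_kc_k$ into $a_1a_2\cdots a_k\, c_1c_2\cdots c_k$ by moving each $c_j$ rightward past all $a_{j'}$ with $j'>j$, each such transposition contributing a factor of $-1$ since creation/annihilation operators on distinct modes (and of distinct species) anticommute. The number of transpositions is $(k-1)+(k-2)+\cdots+0 = k(k-1)/2 = \Delta_{k-1}$, yielding the stated sign. I do not expect any real obstacle here: the whole lemma is bookkeeping with the CAR algebra, and the only place one has to be slightly careful is making sure that the sign count in (g) matches the convention $\Delta_{k-1}=k(k-1)/2$ rather than $k(k+1)/2$, which is checked by examining $k=2$ directly.
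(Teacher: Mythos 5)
Your proposal is correct and follows essentially the same route as the paper: identities (a)--(f) are routine consequences of the canonical anticommutation relations (the paper simply asserts they ``directly follow,'' whereas you supply the bilinear-commutator expansion and the two-anticommutations-make-a-commutation observation), and for (g) your transposition count $\sum_{j=1}^{k}(k-j)=k(k-1)/2$ is exactly the paper's phase computation, merely phrased as moving the $c_j$ rightward instead of the $a_j$ leftward. No gaps.
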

\noindent Taking into account Eq.~(\ref{eq:acorder}) the action of $d$ copies of the fermionic unitary operator results in
\begin{equation}\label{eq:fermi unitary action}
    \ket{\Psi}_{AC}=\bigotimes_{i=1}^d U_{A_iC_i}\ket{vac}=\cos^d{r}\sum_{k=0}^{d}(-)^{\Delta_{k-1}}\tan^k{r}\sum_{n_1,\dots,n_{d}}^{d\choose k}\ket{\vcn}_A\ket{\vcn}_C.
\end{equation}

If we wanted to see how $U_{AC}=\bigotimes_{i=1}^d U_{A_iC_i}$ transforms a fermionic qudit written in the multi-rail basis
$$
\ket{\psi}_{A'C'}=\sum_{i=1}^{d}\b_ia_i^\dg\ket{vac}=\sum_{i=1}^{d}\b_i\ket{i}_{A'}\ket{vac}_{C'}
$$
we might just calculate $\ket{\Phi}_{AC}=U_{AC}\ket{\psi}_{A'C'}$. In order to simplify this complicated calculation we first observe 
\begin{equation}\label{eq:simplified_exp}
    U_{AC}=\cos^d{r}\exp{\Big[\tan{r}\sum_{i=1}^da_i^\dg c_i^\dg\Big]}\exp{\Big[\sum_{i=1}^d(a_i^\dg a_i+c_i^\dg c_i)\ln{\cos^{-1}{r}}\Big]} \exp{\Big[\tan{r}\sum_{i=1}^da_ic_i\Big]},
\end{equation}
where  Eqs.~(\ref{eq:comma}) and (\ref{eq:commb}) for $i\not=j$ were utilized. To proceed we make use of the main advantage of the multi-rail encoding. Due to the presence of $c_i$ annihilating the vacuum state we observe
$$
    \exp{\Big[\tan{r}\sum_{i=1}^da_ic_i\Big]}\Big(\sum_{i=1}^{d}\b_ia_i^\dg\Big)\ket{vac}
    =\sum_{i=1}^{d}\b_ia_i^\dg\ket{vac}.
$$
We may then write
\begin{equation}\label{eq:greatcommuter}
\begin{split}
       \ket{\Phi}_{AC}
       &=\cos^{d}{r}\exp{\Big[\tan{r}\sum_{i=1}^da_i^\dg c_i^\dg\Big]}\exp{\Big[\sum_{i=1}^d(a_i^\dg a_i+c_i^\dg c_i)\ln{\cos^{-1}{r}}\Big]}\Big(\sum_{i=1}^{d}\b_ia_i^\dg\Big)\ket{vac}\\
       &=\cos^{d-1}{r}\exp{\Big[\tan{r}\sum_{i=1}^da_i^\dg c_i^\dg\Big]}\Big(\sum_{i=1}^{d}\b_ia_i^\dg\Big)\ket{vac}\\
       &=\cos^{d-1}{r}\Big(\sum_{i=1}^{d}\b_ia_i^\dg\Big)\exp{\Big[\tan{r}\sum_{i=1}^da_i^\dg c_i^\dg\Big]}\ket{vac}.
\end{split}
\end{equation}
The second row follows from Eq.~(\ref{eq:comanticom}) (see why $i=j$ does not spoil the commutator). The last equality is possible due to Eq.~(\ref{eq:sigmaZXa}).

The action of the unitary $U_{AB}$ leads to
\begin{subequations}\label{eq:transformed qudit}
\begin{align}
    \ket{\Phi}_{AC}&=U_{AC}\sum_{i=1}^{d}\b_i\ket{i}_{A'}\ket{vac}_{C'}
    =\cos^{d-1}{r}\Big(\sum_{i=1}^{d}\b_ia_i^\dg\Big)\exp{\Big[\tan{r}\sum_{j=1}^da_j^\dg c_j^\dg\Big]}\ket{vac}\label{eq:trans_qudita}\\
    &=\cos^{d-1}{r}\Big(\sum_{i=1}^{d}\b_ia_i^\dg\Big)\sum_{k=1}^{d+1}
    \tan^{k-1}{r}(-)^{\Delta_{k-2}}\sum_{N_k}\ket{\dots n_j\dots}_A\ket{\dots n_j\dots}_C\label{eq:trans_quditb}\\
    &=\cos^{d-1}{r}\sum_{k=1}^{d}\tan^{k-1}{r}
    (-)^{\Delta_{k-2}}\Bigg[
    \sum_{N_k}\sum_{i\in I}^{d-k+1}(\pm)_{i}\b_i\ket{\dots n_j+1_i\dots}_A\ket{\dots n_j\dots}_C
    \Bigg]\label{eq:trans_quditc},
\end{align}
\end{subequations}
where we have defined the set $N_k = \{ (n_1, \dots, n_d) | \sum_{j=1}^d n_j = k-1 \}$ to be the sum over all possible states with $k-1$ fermions. The additional fermion (with respect to the $C$ subsystem) in Eq.~\eqref{eq:trans_quditc} occupies the $i$-th position. There is ${d\choose k-1}$ states $\ket{\dots n_j\dots}_C$ in the middle sum of Eq.~(\ref{eq:trans_quditc}) for a given $N_k$ ($k=1\dots d$). The rightmost sum of Eq.~(\ref{eq:trans_quditc}) sums over $i \in I$, for a fixed $N_k$, such that $I = \{ i\text{ } | \text{ }  n_i = 0 \}$ and thus contains $d-k+1$ terms, justifying the upper limit in the sum. Taking, for example, $\b_0=1$ we verify $\cos^{2(d-1)}{r}\sum_{k=1}^{d}\tan^{2(k-1)}{r}{d-1\choose k-1}=1$. The multiplicative sign $(\pm)_{i}$ in Eq.~(\ref{eq:trans_quditc}) is dependent on the fermionic state of the $A$ subsystem. It appears following the operator rules specific to the fermionic algebra. Let us recall some basic properties of the algebra
\begin{equation}\label{eq:fermialgebra}
    \begin{split}
      a_i^\dg\ket{\dots n_j\dots} & = (1-n_i)(-)^{\sum_{j=1}^{i-1}n_j}\ket{\dots n_j+1_i\dots} \\
      a_i    \ket{\dots n_j\dots} & = n_i(-)^{\sum_{j=1}^{i-1}n_j}\ket{\dots n_j-1_i\dots}.
    \end{split}
\end{equation}
Isometry output Eq.~(\ref{eq:transformed qudit}) gives rise to a new class of channels we nicknamed {\em Grassmann channels}.
\begin{defi}\label{defi:grass}
    Let the $d$-dimensional Grassmann channel $\G_d:\B(\bbC^d)\mapsto\B(\bbC^{2^d-1})$ be a quantum channel defined by the action of its isometry $V_{\G_d}$ as $\G_d(\psi_{A'})= \Tr{C}\circ\ V_{\G_d}\circ\psi_{A'}$. The action of the channel on an input qudit is given by
\begin{equation}\label{eq:Grasschanneloutput}
    \G_d:\psi_{A'}\mapsto\g^{(d)}_A=\cos^{2(d-1)}{r}\bigoplus_{k=1}^{d}\tan^{2(k-1)}{r}{{d-1\choose k-1}} \chi^{(d)}_k
    =\bigoplus_{k=1}^{d}p_k\G_{d,k}(\psi_{A'}),
\end{equation}
where $p_k=\cos^{2(d-1)}{r}\tan^{2(k-1)}{r}{{d-1\choose k-1}}$. The maps $\G_{d,k}:\B(\bbC^d)\mapsto\B\big(\bbC^{{d\choose k}}\big)$ for $1\leq k\leq d$ are quantum channels constituting the Grassmann channel $\G_d$ and $\chi^{(d)}_k=\G_{d,k}(\psi_{A'})$ having $\Tr{}\big[\chi^{(d)}_k\big]=1$.
\end{defi}
\begin{rem}
The output block dimension ${d\choose k}$  coincides with the dimension of the vector spaces into which the Grassmann algebra $\Lambda(V)$ over a $d$-dimensional vector space $V$ decomposes ($\Lambda(V)=\bigoplus_{k=1}^d\Lambda^k(V)$), hence the name Grassmann channels. Note that the Grassmann algebra is also known as the exterior algebra.
\end{rem}
\begin{rem}
    The index $k$ in Eq.~(\ref{eq:Grasschanneloutput}) has two roles in $\chi^{(d)}_k$ . It labels the state but it also indicates how many fermions the basis is composed of. We thus know in which basis $\chi^{(d)}_k$ is written.
\end{rem}
\begin{rem}
    The first Grassmann channel $\G_1$ is a trivial trace map. The second Grassmann channel $\G_2$ is a channel recently playing an important role in quantum Shannon theory - a qubit erasure channel~\cite{erasure_channel,era_ass,era_code,era_priv}. We now look at $\G_2$ and $\G_3$ cases in more detail.
\end{rem}
\begin{exa}For $d=2$ we get from Eq.~(\ref{eq:fermi unitary action})
    \begin{equation}
        \begin{split}
        \ket{\Psi}_{AC}&=\cos^2{r}\exp\left[{\tan{r}(a_1^\dg c_1^\dg+a_2^\dg c_2^\dg)}\right]\ket{vac}\\
         &=\cos^2{r}\left[1+\tan{r}(a_1^\dg c_1^\dg+a_2^\dg c_2^\dg)+{\tan^2{r}\over2}(a_1^\dg c_1^\dg+a_2^\dg c_2^\dg)^2\right]\ket{vac}\\
        &=\cos^2{r}\bigg(\ket{vac}+\tan{r}\big(\ket{10}_A\ket{10}_C+\ket{01}_A\ket{01}_C\big)
         -\tan^2{r}\ket{11}_A\ket{11}_C\bigg).
        \end{split}
    \end{equation}
    Following Eq.~(\ref{eq:trans_quditc}) we obtain
    \begin{subequations}
    \begin{align}
    \ket{\Phi}_{AC}
    &={\cos{r}}\Big((\b_1\ket{10}+\b_2\ket{01})\ket{vac}
    +\tan{r}\left(\b_1\ket{11}\ket{01}-\b_2\ket{11}\ket{10}\right)\Big)\label{eq:transformed qubitb}\\
    &=\sqrt{1-p}\big(\b_1\ket{1}+\b_2\ket{2}\big)_A\ket{0}_C
    +\sqrt{p}\ket{3}_A\big(\b_1\ket{2}-\b_2\ket{1}\big)_C.\label{eq:transformed qubitc}
    \end{align}
    \end{subequations}
    The last equation is Eq.~(\ref{eq:transformed qubitb}) rewritten using a reparametrization $\sqrt{1-p}=\cos{r},\sqrt{p}=\sin{r}$ and a logical ket notation ($\ket{0},\dots,\ket{3}$) to facilitate the comparison with an isometry output for a qubit erasure channel
    \begin{equation}\label{eq:erasure_isometry}
        \ket{\Phi}^{erase}_{AC}=\sqrt{1-p}\big(\b_1\ket{1}+\b_2\ket{2}\big)_A\ket{f}_C
        +\sqrt{p}\ket{f}_A\big(\b_1\ket{1}+\b_2\ket{2}\big)_C.
    \end{equation}
    $\ket{f}$ is a flag state orthogonal to both $\ket{1}$ and $\ket{2}$. The most notable difference between Eq.~(\ref{eq:transformed qubitc}) and Eq.~(\ref{eq:erasure_isometry}) is the dimension of the output Hilbert space. For the latter the Hilbert space is three-dimensional. But if we trace over the $A$ or $C$ subsystems and compare them we immediately see that they indeed induce the same channel. The Grassmann channel $\G_2$ is just embedded in a higher-dimensional space than the corresponding erasure channel and unitarily rotated (note the difference in the second brackets of Eqs.~(\ref{eq:transformed qubitc}) and~(\ref{eq:erasure_isometry})).
\end{exa}
\begin{exa}
    For $\G_3$ we get from Eq.~(\ref{eq:trans_quditc})
    \begin{subequations}\label{eq:transformed qutrit}
    \begin{align}
    \ket{\Phi}_{AC}
    &={\cos^2{r}}\Big((\b_1\ket{100}+\b_2\ket{010}+\b_3\ket{001})\ket{vac}\label{eq:transformed qutrita}\\
    &+\tan{r}\left[(-\b_1\ket{101}-\b_2\ket{011})\ket{001}-(\b_1\ket{110}-\b_3\ket{011})\ket{010}+(\b_2\ket{110}+
    \b_3\ket{101})\ket{100}\right]\label{eq:transformed qutritb}\\
    &-\tan^2{r}\ket{111}(\b_1\ket{011}-\b_2\ket{101}+\b_3\ket{110})\Big).\label{eq:transformed qutritc}
    \end{align}
    \end{subequations}
    The trace over the $C$ subsystem gives us the output of $\G_3$
    \begin{equation}
    \G_3:\psi_{A'}\mapsto\g^{(3)}_A=p_1\chi^{(3)}_1\oplus p_2\chi^{(3)}_2\oplus p_3\chi^{(3)}_3,
    \end{equation}
    where $p_1=\cos^4{r},p_2=2\tan^2{r}\cos^4{r},p_3=\tan^4{r}\cos^4{r}$. $\chi^{(3)}_1$ is the input state itself and $\chi^{(3)}_3$ is a flag state $\ket{111}$. We are interested in the form of $\chi^{(3)}_2$
    \begin{equation}
    \chi^{(3)}_2={1\over2}\begin{pmatrix}
            |\b_2|^2+|\b_3|^2 & \b_2\bar\b_1 & -\b_3\bar\b_1 \\
            \bar\b_2\b_1 & |\b_1|^2+|\b_3|^2 & \b_3\bar\b_2 \\
            -\bar\b_3\b_1 & \bar\b_3\b_2 & |\b_1|^2+|\b_2|^2 \\
          \end{pmatrix}
    \end{equation}
    where the two-fermionic basis is ordered as $\{\ket{011},\ket{101},\ket{110}\}$. We easily verify that the Kraus operators
    \begin{equation}
    K_1={1\over\sqrt{2}}\begin{pmatrix}
          0 & -1 & 0 \\
          1 & 0 & 0 \\
          0 & 0 & 0 \\
        \end{pmatrix},
    K_2={1\over\sqrt{2}}\begin{pmatrix}
          0 & 0 & 1 \\
          0 & 0 & 0 \\
          1 & 0 & 0 \\
        \end{pmatrix},
    K_3={1\over\sqrt{2}}\begin{pmatrix}
          0 & 0 & 0 \\
          0 & 0 & 1 \\
          0 & 1 & 0 \\
        \end{pmatrix}
    \end{equation}
    yields $\chi^{(3)}_2$.
\end{exa}
\begin{rem}
     The basis in which the states and Kraus operators from the two previous examples are written are generated by the action of creation operators satisfying the canonical anticommutation relations. Yet in our case we treat the Grassmann channel outputs as if written in a multi-qubit basis and we apply the usual machinery of quantum Shannon theory (namely utilizing the tensor product structure). The following Theorem justifies this step. Note that we will illustrate it on the case of the quantum capacity. The classical capacity follows the same line of arguments.
\end{rem}
\begin{thm}\label{thm:FermiToQuditTransformation}
     The quantum capacity formula Eq.~(\ref{eq:quantcap}) remains unchanged if for an isometry $V_\K:A'\hookrightarrow AC$ the input Hilbert space is spanned by the fermionic multi-rail basis introduced in Definition~\ref{defi:multirail}.
\end{thm}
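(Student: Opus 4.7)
The plan is to establish that, despite the fermionic antisymmetry of the underlying Fock space, the multi-rail encoding lets us identify the input Hilbert space with an ordinary qudit space $\bbC^d$ in a manner compatible with the tensor product structure required by $Q(\K)=\lim_{n\to\infty}(1/n)Q^{(1)}(\K^{\otimes n})$. In short, I would verify (i) that a single use of $V_\K$ produces an output in a genuine bipartite Hilbert space $\H_A\otimes\H_C$ on which the standard entropic quantities make sense, and (ii) that for $n$ uses on disjoint sets of modes the isometries factorize as an honest tensor product, so that all multi-letter entropies are computed exactly as in the usual qudit setting.

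For a single use, the bipartite structure is essentially built in: the output in Eq.~(\ref{eq:transformed qudit}) lives on two distinct sets of modes labelled $A$ and $C$, which play the roles of two separate fermionic Fock spaces. These spaces carry the tensor product $\H_A\otimes\H_C$ well-definedly, so $H(A)_\tau$ and $H(C)_\tau$ are just von Neumann entropies of ordinary reduced density matrices $\tau_A$ and $\tau_C$. The nontrivial step is the $n$-fold tensorization. I would fix, for each channel use $j=1,\ldots,n$, a disjoint set of modes $\{a_i^{(j)},c_i^{(j)}\}_{i=1}^d$ and write the $n$-fold multi-rail input with a fixed ordering convention as $\sum_{\vec i}\b_{i_1\ldots i_n}a_{i_1}^{(1)\dg}\cdots a_{i_n}^{(n)\dg}\ket{vac}$, so that the amplitudes $\b_{\vec i}$ define a pure state in $(\bbC^d)^{\otimes n}$ and the multi-rail product states form an orthonormal basis isomorphic to the computational basis of that space.

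The key observation is that each isometric extension $V_{\K_j}$, by Eq.~(\ref{eq:exponential}), is generated by bilinear (fermion-parity-even) combinations of operators acting only on mode set $j$; even-parity operators supported on disjoint sets of fermionic modes commute, so $V_{\K_j}$ and $V_{\K_{j'}}$ commute for $j\neq j'$ and $V_\K^{\otimes n}=\prod_jV_{\K_j}$ acts independently on each copy. Applying the single-copy reduction of Eq.~(\ref{eq:greatcommuter}) factor by factor, the output lives in $\bigotimes_{j=1}^n(\H_{A_j}\otimes\H_{C_j})$ with exactly the amplitudes one would obtain by applying $\G_d^{\otimes n}$ to the qudit state $\ket{\psi}\in(\bbC^d)^{\otimes n}$. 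Consequently the spectra of $\tau_{A^n}$ and $\tau_{C^n}$ match those of the qudit reductions, the supremum over pure inputs $\psi$ in Eq.~(\ref{eq:optimcoh}) coincides with its qudit counterpart, and the limit in Eq.~(\ref{eq:quantcap}) carries over unchanged.

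The main obstacle I expect is ruling out spurious phases generated by fermionic reordering across copies. This is resolved by the parity argument above: because the bilinear generators preserve fermion parity, every anticommutation sign is absorbed inside a single multi-rail block (and is already accounted for in the $(\pm)_i$ and $(-)^{\Delta_{k-1}}$ factors of Eq.~(\ref{eq:transformed qudit})); no cross-copy sign appears. Verifying this cleanly requires a careful bookkeeping of the ordering convention chosen for the creation operators, but once that is fixed the identification with $(\bbC^d)^{\otimes n}$ is canonical and all formulas carry over verbatim. The same reasoning will apply to the Holevo quantity and hence justify the remark that the classical capacity case follows by an identical argument.
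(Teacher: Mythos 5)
Your overall strategy --- identifying the $n$-fold multi-rail input space with $(\bbC^d)^{\otimes n}$ and showing that the $n$ isometries act independently on disjoint mode sets --- is the same as the paper's. The gap is in the step where you dismiss the reordering phases. You claim that because the generators of each $V_{\K_j}$ are parity-even bilinears, ``no cross-copy sign appears.'' Parity-evenness of the generators only shows that $V_{\K_j}$ and $V_{\K_{j'}}$ commute for $j\neq j'$; it does not make the fermionic reordering signs vanish. The output naturally comes ordered as $A_1C_1A_2C_2\cdots$ (cf.\ Eq.~(\ref{eq:reordered_product})), and to compute $H(A_1\dots A_n)$ and $H(C_1\dots C_n)$ one must pass to the ordering $A_1\dots A_nC_1\dots C_n$. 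Moving the $C_1$ creation operators past the $A_2$ ones produces a sign $(-1)^{n_{C_1}n_{A_2}}$ that genuinely depends on the term in the superposition, because the channel output superposes components with different fermion numbers $k_j$ in each copy; this is exactly what the paper records in Eq.~(\ref{eq:fermionicSWAP}). So cross-copy signs do appear, and they are not ``absorbed'' into the intra-copy factors $(\pm)_i$ and $(-)^{\Delta_{k-1}}$.

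The missing idea is the blockwise cancellation argument: after tracing over $C_1\dots C_n$ (or $A_1\dots A_n$) the reduced state is block-diagonal in fermion number, each block only pairs kets and bras with identical fermion content in every swapped subsystem, so the reordering sign is constant on each block and cancels in the outer products $\kbr{\cdot}{\cdot}$ forming the density matrix. Without this (or some equivalent demonstration that the fermionic partial trace of these particular states agrees with the qudit partial trace), your identification of the spectra of $\tau_{A^n}$ and $\tau_{C^n}$ with their qudit counterparts is unjustified, and the conclusion about Eq.~(\ref{eq:optimcoh}) does not follow. Your closing hedge about ``careful bookkeeping of the ordering convention'' points at exactly the right place, but that bookkeeping is the nontrivial content of the theorem rather than a routine verification.
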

\begin{proof}
     Using the definition~of the optimized coherent information Eq.~(\ref{eq:optimcoh}) we write the relevant part of Eq.~(\ref{eq:quantcap}) for $n=2$
     \begin{equation}\label{eq:OptiCohInfo_n2}
        Q^{(1)}(\K^{\otimes 2})=\sup_{\Theta}{[H(A_1A_2)_{\tau^{\otimes 2}}-H(C_1C_2)_{\tau^{\otimes 2}}}],
     \end{equation}
     where $\Theta\equiv\Theta_{A'_1C'_1A'_2C'_2}$ is an input state. The structure of the input state is crucial. First, recall from Sec.~\ref{sec:def} that we represent isometry $V_\K$ as a unitary from Eq.~(\ref{eq:simplified_exp}) $U_{AC}:A'C'\mapsto AC$ where the input reference system $C'$ is prepared in a vacuum state. Therefore, every pair of input modes $A'_jC'_j$ contains exactly one fermion and an arbitrary input basis state $\big\{\ket{\Theta^{(k)}}_{A'_1C'_1A'_2C'_2}\big\}_{k=1}^{d^2}$ contains two fermions. The basis is clearly constructed by tensoring product basis of two Hilbert spaces $A'_1C'_1$ and $A'_2C'_2$. Note the order of the product reflected by the order of the primed subscripts. We now turn our attention to the action of $(U^{(1)}_{AC}\otimes U^{(2)}_{AC})$ on  this basis. The superscripts on the unitaries go from one to $n$ to distinguish it from $U_{A_iC_i}$ in Eq.~(\ref{eq:fermi unitary action}). We find that
     \begin{equation}\label{eq:reordered_product}
        (U^{(1)}_{AC}\otimes U^{(2)}_{AC})
        \ket{\Theta^{(k)}}_{A'_1C'_1A'_2C'_2}=\ket{\Phi^{(k_1)}}_{AC}\otimes\ket{\Phi^{(k_2)}}_{AC}
        \equiv \tau^{(k)}_{A_1C_1A_2C_2},
     \end{equation}
     where the middle equation is a product of two states from Eq.~(\ref{eq:trans_quditc}) each of them having all $\beta$'s but one equal zero ($k_1,k_2=1\dots d$). This crucial step is possible due to the fact that each pair of modes $A'_jC'_j$ contains just one fermion, the structure of $U^{(j)}_{AC}$ in Eq.~(\ref{eq:simplified_exp}) and identity~(\ref{eq:sigmaZXa}).

     Looking at Eq.~(\ref{eq:trans_quditc}) (or better at the examples in Eqs.~(\ref{eq:transformed qubitb}) and~(\ref{eq:transformed qutrit}) representing a generic situation) we see that
     \begin{equation}\label{eq:fermionicSWAP}
         \mathrm{SWAP}_{A_2C_1}\tau^{(k)}_{A_1C_1A_2C_2}\neq\tau^{(k)}_{A_1A_2C_1C_2}.
     \end{equation}
     A sign change occurs according to the number of fermions the swapped systems contain. However, when we apply the unitary product on an arbitrary input state
     \begin{equation}
        (U^{(1)}_{AC}\otimes U^{(2)}_{AC})\Bigg(\sum_{k=1}^{d^2}\alpha_k\ket{\Theta^{(k)}}_{A'_1C'_1A'_2C'_2}\Bigg)
        =\tau_{A_1C_1A_2C_2},
     \end{equation}
     where $|\alpha_k|^2=1$, and reorder the output followed by partial trace over $C_1C_2$ ($A_1A_2$) to get the (complementary) channel output, we notice another crucial property. The partial trace produces a direct sum structure where for each subspace the sign that comes from the reordering procedure is identical. The reason is that each block contains only fermionic states of a given fermionic number and its constituents come from the parts of $\tau_{A_1C_1A_2C_2}$ where the parity of the number of fermions in $C_1$ and $A_2$ remains unchanged (so the `swapping' sign is the same). So if the sign is positive (even or odd number of fermions in both subsystems) nothing happens and if the sign is negative (opposite parities in $C_1$ and $A_2$) the sign gets canceled for a given subspace. Stated differently, the phase (in this case just plus or minus one) is common for each subspace so it disappears in the density matrix formalism (blockwise) and therefore for the whole block-diagonal density matrix.

     Only little changes if we take a tensor product of $n$ unitaries in Eq.~(\ref{eq:OptiCohInfo_n2}). The dimension of the input Hilbert space will be $d^n$ and the equivalent of Eq.~(\ref{eq:reordered_product}) will hold as well as the phase argument for the resulting block-diagonal structure after taking partial trace over the $C_1\dots C_n$ or $A_1\dots A_n$ subsystem.
\end{proof}
\begin{rem}
     Consequently, provided that we use the multi-rail encoding for the isometry input, we may simply take an output state $\tau_{A(C)}$ and pretend that the basis in which it is written is an ordinary multi-qubit basis. Thus taking the tensor product in quantities defined in Eqs.~(\ref{eq:classcap}) and (\ref{eq:quantcap}) is justified by the above theorem. We may take their tensor products following the usual rules as in ordinary quantum information theory.

     Also note that in the proof of degradability of Grassmann channels in Section~\ref{sec:quantcap} we work with entire isometry outputs and so we have to fully respect the fermionic character of these states.
\end{rem}
In the proof of the next theorem we will extensively use the geometric picture from the representation theory of the $\sldc$ Lie algebras. We summarized some basic facts in Appendix~\ref{app:Liealgebras}.
\begin{thm}\label{thm:covariance}
    Let the first block of $\gamma^{(d)}_A$ in Eq.~(\ref{eq:Grasschanneloutput}) be written as
    \begin{equation}\label{eq:firstblock}
      \chi_1^{(d)}={1\over d}\Big(\openone+\sum_{\a=1}^{L}n_\a\lambda^{(1)}_\a\Big),
    \end{equation}
    where $\lambda^{(1)}_\a$ are generators of the dual representation to the fundamental representation of the $\sldc$ algebra,  $L=2d(d-1)$ and $n_\a$ are functions of $\b_i\bar\b_j$. Then the remaining blocks in  Eq.~(\ref{eq:Grasschanneloutput}) can be expanded with the same coefficients $n_\a$
    \begin{equation}\label{eq:allblocks}
        \chi_k^{(d)}={1\over{d\choose k}}\Big(\openone+\sum_{\a=1}^{L}n_\a\lambda^{(k)}_\a\Big),
    \end{equation}
    where $\lambda^{(k)}_\a$ are generators of the $k$-th completely antisymmetric representation of the $\sldc$ algebra.
\end{thm}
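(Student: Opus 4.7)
Proof plan:

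My strategy is to derive a single representation-theoretic identity linking every block $\chi_k^{(d)}$ to $\chi_1^{(d)}$ and then read off the claim.

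The first step is to compute $\chi_k^{(d)}$ directly from Eq.~(\ref{eq:trans_quditc}). After tracing over $C$ one collects outer products of the $A$-kets indexed by the $C$-configurations $\vec n_C$ with $k-1$ fermions. Tracking the signs $(\pm)_i$ supplied by Eq.~(\ref{eq:fermialgebra}) and noting that the global sector sign $(-)^{\Delta_{k-2}}$ appears twice and squares to $+1$, the trace should yield the compact expression
\begin{equation*}
    \chi_k^{(d)}(\psi_{A'}) = \frac{1}{\binom{d-1}{k-1}}\,\Pi_k\bigg(\sum_{i,j=1}^d \b_i\bar\b_j\, a_i^\dg a_j\bigg)\Pi_k,
\end{equation*}
where $\Pi_k$ projects onto the $k$-fermion subspace. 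The product of signs $(\pm)_i(\pm)_{i'}$ left over from the $A$-kets must reproduce the matrix element $\langle K|a_i^\dg a_j|K'\rangle$ on the antisymmetric basis, which one verifies from the identities (\ref{eq:comma})--(\ref{eq:sigmaZXa}) and (\ref{eq:acorder}); this sign bookkeeping is where I expect most of the concrete effort to go.

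Next I note that the operator $\sum_{ij}M_{ij}\, a_i^\dg a_j$ restricted to the $k$-fermion sector is, by construction, the image $\rho_k(M)$ of $M\in\mathfrak{gl}(d,\bbC)$ under the $k$-th completely antisymmetric representation, and the generators $\lambda^{(k)}_\a$ of Appendix~\ref{app:Liealgebras} are the images of $\lambda^{(1)}_\a$ in the corresponding normalization. Taking $M=|\psi\rangle\langle\psi|=\chi_1^{(d)}$ then gives the clean covariant identity
\begin{equation*}
    \chi_k^{(d)} = \frac{1}{\binom{d-1}{k-1}}\,\rho_k\!\left(\chi_1^{(d)}\right).
\end{equation*}

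The rest is algebra. Substituting $\chi_1^{(d)}=\tfrac{1}{d}(\openone+\sum_\a n_\a\lambda^{(1)}_\a)$, using linearity of $\rho_k$ together with $\rho_k(\openone)=k\,\openone_{\binom{d}{k}}$ (the total fermion number equals $k$ on this sector) and the elementary identity $\binom{d-1}{k-1}=(k/d)\binom{d}{k}$, one arrives at
\begin{equation*}
    \chi_k^{(d)} = \frac{1}{\binom{d}{k}}\bigg(\openone+\sum_\a n_\a\,\lambda^{(k)}_\a\bigg),
\end{equation*}
with identical coefficients $n_\a$ across all blocks, as claimed. A conceptual cross-check is the $SU(d)$-covariance of the isometry: the bilinear $\sum_i a_i^\dg c_i^\dg$ is invariant under the joint action of $U\in SU(d)$ on the $a$-modes and $U^*$ on the $c$-modes, so Schur's lemma forces $\chi_k^{(d)}$ to depend on the input through a single adjoint-covariant intertwiner, which up to the block-specific normalization of $\lambda^{(k)}_\a$ fixes a single set of coefficients $n_\a$ for every $k$.
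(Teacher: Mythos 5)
Your route is correct in substance but genuinely different from the paper's. The paper proves the claim generator by generator: it treats the diagonal (Cartan) and off-diagonal (step) operators separately, matches each term of the traced-out output of Eq.~(\ref{eq:trans_quditc}) against the fermionic realization of an $\sltw$ subalgebra, and then counts the $\binom{d-2}{k-1}$ parallel copies that assemble into the $k$-th completely antisymmetric representation, following the geometric picture of Appendix~\ref{app:Liealgebras}. You instead package everything into the single identity $\chi_k^{(d)}=\binom{d-1}{k-1}^{-1}\Pi_k\big(\sum_{i,j}\b_i\bar\b_j\,a_i^\dg a_j\big)\Pi_k$, i.e.\ every block is the second quantization of $\kbr{\psi}{\psi}$ restricted to the $k$-fermion sector. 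That identity is correct, and the sign bookkeeping you worry about is in fact immediate: by Eq.~(\ref{eq:fermialgebra}) the factor $(\pm)_i$ in Eq.~(\ref{eq:trans_quditc}) is precisely the phase of $a_i^\dg\ket{\vec n}$, so tracing over $C$ yields $\sum_{\vec n\in N_k}\big(\sum_i\b_ia_i^\dg\big)\kbr{\vec n}{\vec n}\big(\sum_j\bar\b_ja_j\big)$, and $\sum_{\vec n\in N_k}\kbr{\vec n}{\vec n}=\Pi_{k-1}$ together with $a_i^\dg\Pi_{k-1}=\Pi_k a_i^\dg$ finishes the job (one can check the result against the explicit $\chi^{(3)}_2$ displayed in the paper). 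What your approach buys is a cleaner, more conceptual statement --- all blocks are images of the input under a single intertwiner, which also delivers Corollary~\ref{cor:covariance} for free --- at the price of leaning on the identification of the bilinears $a_i^\dg a_j$ on the $k$-fermion sector with the representation $d\Lambda^k$, which the paper instead establishes by hand through its $\sltw$ decomposition.

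The one place you must be more careful is the final normalization. Carrying your algebra through literally gives $\chi_k^{(d)}=\binom{d}{k}^{-1}\big(\openone+\tfrac{1}{k}\sum_\a n_\a\,\rho_k(\lambda^{(1)}_\a)\big)$: the identity part comes out right because $\rho_k(\openone)=k\openone$ and $d\binom{d-1}{k-1}=k\binom{d}{k}$, but the traceless part inherits an extra factor $1/k$. Hence Eq.~(\ref{eq:allblocks}) with the \emph{same} coefficients $n_\a$ holds only if one defines $\lambda^{(k)}_\a=\tfrac{1}{k}\rho_k(\lambda^{(1)}_\a)$; you allude to ``the corresponding normalization'' but never fix it, and a reader following your displayed computation would land on coefficients $n_\a/k$. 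To be fair, the paper treats these constants just as loosely (its Eq.~(\ref{eq:diagpart}) normalizes the diagonal part by $1/\binom{d}{k}$ although the trace of the double sum there is $\binom{d-1}{k-1}$), and since the discrepancy is a uniform rescaling of the generators it affects none of the theorem's uses. Still, you should state the normalization of $\lambda^{(k)}_\a$ explicitly --- your method, unlike the paper's, lets you do so in one line.
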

\begin{rem}
    The number $L$ in Eqs.~(\ref{eq:firstblock}) and~(\ref{eq:allblocks}) comes from the use of a redundant number of the $\sltw$ subalgebras as discussed in Appendix~\ref{app:Liealgebras}. Namely, there is $d(d-1)$ diagonal generators $H_n$, $d(d-1)/2$ off-diagonal generators $E_{ij}$ (recall that $i<j$) and $d(d-1)/2$ of their Hermitian conjugates~$E^\dg_{ij}$.
\end{rem}
\begin{rem}
    Since we use a linearly dependent set of algebra generators the expansion coefficients $n_\a$ are not unambiguously determined. For us, however, it is sufficient to show that at least for one specific construction (the one presented here) the coefficients can be chosen to stay preserved when switching to a higher-dimensional representation of $\sldc$.
\end{rem}
The upcoming proof will be divided into two parts. We will separately prove Eq.~(\ref{eq:allblocks}) for diagonal and off-diagonal generators. To better follow the proof it might be  helpful to watch the example of Eq.~(\ref{eq:trans_quditc}) for $d=4$. In the course of the proof there is a remark illustrating results on this case.
\begin{exa}For $d=4$ we get from Eq.~(\ref{eq:trans_quditc})
\begin{equation}\label{eq:exampled4}
    \begin{split}
        \ket{\Phi}_{AC}
        &={\cos^3{r}}\Bigg(\Big(\b_1\ket{1000}+\b_2\ket{0100}+\b_3\ket{0010}+\b_4\ket{0001}\Big)\ket{vac}\\
        &+\tan{r}\Big[
                      \Big(\b_1\ket{1001}+\b_2\ket{0101}+\b_3\ket{0011}\Big)\ket{0001}\\
        &\hspace{9mm}+\Big(\b_1\ket{1010}+\b_2\ket{0110}-\b_4\ket{0011}\Big)\ket{0010}\\
        &\hspace{9mm}+\Big(\b_1\ket{1100}-\b_3\ket{0110}-\b_4\ket{0101}\Big)\ket{0100}\\
        &\hspace{9mm}-\Big(\b_2\ket{1100}+\b_3\ket{1010}+\b_4\ket{1001}\Big)\ket{1000}\Big]\\
        &-\tan^2{r}\Big[
          \Big(\b_1\ket{1011}+\b_2\ket{0111}\Big)\ket{0011}+\Big(\b_1\ket{1101}-\b_3\ket{0111}\Big)\ket{0101}\\
        &\hspace{11mm}-\Big(\b_2\ket{1101}+\b_3\ket{1011}\Big)\ket{1001}+\Big(\b_1\ket{1110}+\b_4\ket{0111}\Big)\ket{0110}\\
        &\hspace{11mm}+\Big(-\b_2\ket{1110}+\b_4\ket{1011}\Big)\ket{1010}+\Big(\b_3\ket{1110}+\b_4\ket{1101}\Big)\ket{1100}\Big]\\
        &-\tan^3{r}\ket{1111}\Big(-\b_1\ket{0111}+\b_2\ket{1011}-\b_3\ket{1101}+\b_4\ket{1110}\Big)\Bigg).
    \end{split}
 \end{equation}
    We do not indicate the subsystems but in a product of two kets the first one is the $A$~subsystem and the second one is the $C$~subsystem.
\end{exa}

We will employ the fermionic representation of the $\sltw$ algebra~\cite{liealgebras} (see also Eqs.~(\ref{eq:sl2C}))
\begin{subequations}\label{eq:fermirep_of_sl2C}
    \begin{align}
        \begin{pmatrix}
          1 & 0 \\
          0 & -1 \\
        \end{pmatrix}
        &=a_m^\dg a_m-a_l^\dg a_l\\
        \begin{pmatrix}
          0 & 1 \\
          0 & 0 \\
        \end{pmatrix}
        &=a_m^\dg a_l\label{eq:fermirep_of_sl2Cb}\\
        \begin{pmatrix}
          0 & 0 \\
          1 & 0 \\
        \end{pmatrix}
        &=a_l^\dg a_m,
    \end{align}
\end{subequations}
where $m\not=l$ are mode labels.

\begin{proof}
    (i) Let us first find the coefficients $n_\a$ from Eq.~(\ref{eq:firstblock}) and then show that they remain preserved in  Eq.~(\ref{eq:allblocks}). Since we have
    \begin{equation}\label{eq:beta1}
         {|\b_1|^2\over d}\Big(\sum_{i=1}^da^\dg_ia_i+\sum_{j=1}^{d-1}(a_1^\dg a_1-a_{1+j}^\dg a_{1+j})\Big)
         =|\b_1|^2a_1^\dg a_1\equiv|\b_1|^2\kbr{1\dots0}{1\dots0}
    \end{equation}
    and so
    \begin{equation}\label{eq:betai}
         {|\b_i|^2\over d}\Big(\sum_{i=1}^da^\dg_ia_i
         +\sum_{j=1}^{d-1}(a_i^\dg a_i-a_{i\oplus j}^\dg a_{i\oplus j})\Big)
         =|\b_i|^2a_i^\dg a_i\equiv|\b_i|^2\kbr{0\dots1_i\dots0}{0\dots1_i\dots0}
    \end{equation}
    there is $(d-1)$ coefficients $n_\a$ that equal $|\b_i|^2$ for $i=1\dots d$. Note that $i\oplus j\equiv i+j\mod{(d+1)}$.

    For $k>1$ we first notice in Eq.~(\ref{eq:trans_quditc}) the presence of the factors $(\pm)_i$. They are irrelevant for now since we investigate the diagonal generators. The important fact is that the numerical coefficients of all $\b_i$'s are the same in the absolute value, that is, equal to one. We trace over $C$ and the diagonal part of the $A$ subsystem for a given $k$ reads
    \begin{equation}\label{eq:diagpart}
        \diag{[\chi_k^{(d)}]}={1\over{d\choose k}}\Bigg[\sum_{N_k}\sum_{i\in I}^{d-k+1}|\b_i|^2\kbr{\dots n_j+1_i\dots}{\dots n_j+1_i\dots}\Bigg].
    \end{equation}
    Each diagonal element $\sum_{i\in I}^{d-k+1}|\b_i|^2\kbr{\dots n_j+1_i\dots}{\dots n_j+1_i\dots}$ corresponding to a given $n_j$ can be factorized
    \begin{align}\label{eq:diagpart_factorized}
        &\sum_{i\in I}^{d-k+1}|\b_i|^2\kbr{\dots n_j+1_i\dots}{\dots n_j+1_i\dots}\nn\\
        &=\Bigg(\sum_{i\in I}^{d-k+1}|\b_i|^2\kbr{0\dots1_i\dots0}{0\dots1_i\dots0}\Bigg)
         \kbr{\dots n_j\dots}{\dots n_j\dots},
    \end{align}
    where the expression in the parenthesis in the second line can be rewritten using the fermionic representation of the $\sltw$ algebra from Eq.~(\ref{eq:betai}). The rest serves as a label (recall that there is ${d\choose k-1}$ orthogonal states for which $\sum n_j=k-1$). However, because of the factorization in Eq.~(\ref{eq:diagpart_factorized}) the dimension of the first expression will become $d'=d-\sum_{j=1}^d n_j$ giving us
    \begin{align}\label{eq:diagpart_factorized_rewritten}
        &\sum_{i\in I}^{d-k+1}|\b_i|^2\kbr{\dots n_j+1_i\dots}{\dots n_j+1_i\dots}\nn\\
        &=\Bigg(\sum_{i\in I}^{d-k+1}|\b_i|^2\Big(\sum_{i=1}^{d'}a^\dg_ia_i
         +\sum_{j=1}^{d'-1}(a_i^\dg a_i-a_{i\oplus j}^\dg a_{i\oplus j})\Big)\Bigg)
         \kbr{\dots n_j\dots}{\dots n_j\dots}.
    \end{align}
    So the diagonal elements for $k>1$ can also be expressed using only the $\sltw$ diagonal algebra generators. This is what we expected following the discussion in Appendix~\ref{app:Liealgebras}.

    Now we have to make sure that there is the right number of summands when constructing the direct sum $\sltw$ subalgebra representations also discussed in Appendix~\ref{app:Liealgebras} in the last paragraph. This is indeed the case.
    The orthogonal `label states' factorized out in Eq.~(\ref{eq:diagpart_factorized_rewritten}) label the subspace in which the $\sltw$ subalgebra in the parenthesis lives. We want to count how often this situation happens. Since it is a $\sltw$ subalgebra we have two spots out of $d$ in each ket occupied by two fermions (by choosing, for example, $\b_1$ and $\b_2$ or any other pair) and the remaining number of spots can be occupied by $\sum_{j=1}^d n_j$ fermions. The number of possibilities is
    $$
    \#[orthogonal\ subspaces]={d-2\choose\sum_{j=1}^d n_j}\equiv{d-2\choose k-1}.
    $$
    We thus reproduced the result found in Appendix~\ref{app:Liealgebras} for all completely antisymmetric representations of the $\sldc$ algebras.
    \begin{rem}
        We might verify the above considerations on Eq.~(\ref{eq:exampled4}). For $k=2$ (lines 2 to 5) tracing over $C$ and taking the diagonal part gives us unnormalized Eq.~(\ref{eq:diagpart}) for which $d-k+1=3$. There is ${4\choose1}=4$ states where $\sum_{j=1}^d n_j=1$ holds and they can be factorized out as in Eq.~(\ref{eq:diagpart_factorized_rewritten}). There are therefore four blocks with mutually orthogonal flags attached to them but only ${2\choose1}=2$ of them have a common pair of coefficients (e.g., $\b_1$ and $\b_2$). This is the case illustrated in Fig.~\ref{fig:GrassXXX}b. All edges of the octahedron are $\sltw$ algebras. The pairs of parallel lines form the generators of a completely antisymmetric representation of $\slfo$ in the form of a direct sum. They span a 4-dimensional space but none of the spaces is orthogonal to any other.
    \end{rem}
    (ii) Proving  Eq.~(\ref{eq:allblocks}) for off-diagonal generators is considerably simpler. Looking at the rightmost sum of Eq.~(\ref{eq:trans_quditc}) we see that every state $\ket{\dots n_j\dots}_C$ is multiplied by $\sum_j\b_j(\pm)_{j}\ket{\dots n_j+1_i\dots}_A$ containing one fermion more in the $i$-th mode compared to the $C$ subsystem. Tracing over $C$ and taking the off-diagonal part gives us expressions of the following type
    \begin{equation}\label{eq:offdiagonalgenerators}
        \sum_{l,m}\b_l\bar\b_m(\pm)_{l}(\pm)_{m}\kbr{\dots n_i+1_l\dots}{\dots n_j+1_m\dots}.
    \end{equation}
    We immediately recognize the off-diagonal matrices forming Eq.~(\ref{eq:offdiagonalgenerators}) to be proportional to the step operators pertaining to the $\sltw$ subalgebra from Eq~(\ref{eq:fermirep_of_sl2C}). The situation is only complicated by the presence of a function $(\pm)_{l}(\pm)_{m}$ responsible for the change of a sign. The function comes from the fermionic relations~Eq.~(\ref{eq:fermialgebra}). The change of sign therefore appears if we annihilate the fermion in the $l$-th mode and create it in the $m$-th mode
    $$
    (\pm)_{l}(\pm)_{m}\ket{\dots n_j+1_m\dots}_A=a_m^\dg a_l\ket{\dots n_j+1_l\dots}.
    $$
    But this is precisely the action of the operator from Eq.~(\ref{eq:fermirep_of_sl2Cb}) representing the $\sltw$ subalgebra. We again take a direct sum of ${d-2\choose k-1}$ off-diagonal generators. They form the off-diagonal generators of the $k$-th lowest completely antisymmetric representation of $\sldc$.
\end{proof}
\begin{cor}\label{cor:covariance}
    The Grassmann channels $\G_d$ are $SU(d)$ covariant.
\end{cor}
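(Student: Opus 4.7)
The plan is to read off $SU(d)$ covariance directly from the block-uniform Bloch expansion proved in Theorem~\ref{thm:covariance}. Recall that a channel $\mathcal{K}:\mathcal{B}(\mathbb{C}^d)\mapsto\mathcal{B}(\mathcal{H}_{\mathrm{out}})$ is $SU(d)$-covariant if for every $U\in SU(d)$ there is a unitary $V(U)$ on the output with $\mathcal{K}(U\psi U^\dg)=V(U)\mathcal{K}(\psi)V(U)^\dg$. For the Grassmann channel, the natural candidate suggested by Definition~\ref{defi:grass} is $V(U)=\bigoplus_{k=1}^{d}\Lambda^k U$, the direct sum of the $k$-th exterior powers of $U$ acting on the summands $\Lambda^k\mathbb{C}^d$ of the output Hilbert space $\mathbb{C}^{2^d-1}=\bigoplus_{k=1}^d\Lambda^k\mathbb{C}^d$.

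First I would note that the first block $\chi_1^{(d)}$ is (up to normalization) the input state itself, as the low-dimensional examples make clear. Under $\psi\mapsto U\psi U^\dg$ it is re-expressed in the form of Eq.~(\ref{eq:firstblock}) with new coefficients $n_\alpha'$ obtained from the old $n_\alpha$ by the adjoint action of $U$ on $\sldc$: writing $X_\psi=\sum_\alpha n_\alpha\lambda_\alpha^{(1)}$, we have $UX_\psi U^\dg=\sum_\alpha n_\alpha'\lambda_\alpha^{(1)}$. The crucial consequence of Theorem~\ref{thm:covariance} is that the \emph{same} coefficients $n_\alpha'$ then parameterize every other block of $\mathcal{G}_d(U\psi U^\dg)$, now against the generators $\lambda_\alpha^{(k)}$ of the $k$-th completely antisymmetric representation.

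Next I would lift the adjoint action to the group level inside each block. Writing $\rho_k$ for the $k$-th antisymmetric representation of $\sldc$ and $\Lambda^k(U)$ for its integrated $SU(d)$ representation, the homomorphism identity $\Lambda^k(U)\,\rho_k(X)\,\Lambda^k(U)^\dg=\rho_k(UXU^\dg)$ valid for every algebra element $X$ yields
\begin{equation*}
\Lambda^k(U)\,\chi_k^{(d)}\,\Lambda^k(U)^\dg=\frac{1}{\binom{d}{k}}\Big(\openone+\sum_{\alpha=1}^{L} n_\alpha'\lambda_\alpha^{(k)}\Big),
\end{equation*}
which by Theorem~\ref{thm:covariance} applied to the transformed input is precisely the $k$-th block of $\mathcal{G}_d(U\psi U^\dg)$. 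Since the probabilities $p_k$ in Eq.~(\ref{eq:Grasschanneloutput}) are input-independent, direct summation over $k$ gives $V(U)\mathcal{G}_d(\psi)V(U)^\dg=\mathcal{G}_d(U\psi U^\dg)$, as required.

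The one technical point I expect to spend care on is the non-uniqueness of the coefficients $n_\alpha$, emphasized in the remark following Theorem~\ref{thm:covariance}: because the generators $\lambda_\alpha^{(1)}$ are linearly dependent, many choices of $n_\alpha$ represent the same block. This is not a genuine obstacle, however, because the adjoint action of $U$ on $\sldc$ is representation-independent, so fixing any one convention (for instance the one produced by the proof of Theorem~\ref{thm:covariance}) furnishes consistent transformed coefficients $n_\alpha'$ across all blocks simultaneously, and the block-by-block identification above closes the argument.
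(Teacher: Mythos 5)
Your proof is correct and takes essentially the same route as the paper: the paper simply cites the analogous Corollary~17 of~\cite{CMP} (for completely symmetric representations), whose argument is precisely what you spell out -- the uniform Bloch coefficients across all blocks guaranteed by Theorem~\ref{thm:covariance}, lifted to the group level via the exterior-power representations $\Lambda^k U$, with the block weights $p_k$ being input-independent. Your closing remark on the non-uniqueness of the $n_\a$ correctly disposes of the only technical wrinkle.
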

\begin{proof}
    The proof is identical to Corollary~17 for completely symmetric representations of $\sldc$ in~\cite{CMP} showing that the Unruh channel is $SU(d)$ covariant.
\end{proof}

\section{Quantum Capacity}\label{sec:quantcap}

There is an interesting symmetry between the $A$ and $C$ output subsystem captured in the following lemmas and later necessary for the proof of degradability of the Grassmann channels.
\begin{lem}\label{lem:firstswitcheslast}
    Labeling the Pauli matrices $\s_Z$ and $\s_X$ by $Z$ and $X$ and a $d$-dimensional identity by $I_d$ we first introduce an infinite product $Z\otimes I_2\otimes Z\otimes I_2\otimes Z\otimes I_2\otimes\dots$. A $d$-mode unitary operator $\Xi_d$ is defined as the first $d$ unitaries of the product. Then the following identity holds
\begin{equation}\label{eq:firstswitcheslast}
    \O_{0\to(d-1)}\circ X^{\otimes2d}\circ(\Xi_d\otimes I_d)\circ(-)^{d}\Big(\sum_{i=1}^d\b_ia_i^\dg\Big)\ket{vac}_{AC}
    =\tan^{d-1}{r}\Big(\sum_{i=1}^d\b_ic_i^\dg\Big)\ket{\vcn}_{A}\ket{\vcn}_{C},
\end{equation}
    where $\sum_{i=1}^dn_i=d-1$ and $\O_{(k-1)\to (d-k)}:\tan^{k-1}{r}\to\tan^{d-k}{r}$ is a non-physical operation solely acting on the trigonometric function in Eq.~(\ref{eq:trans_quditb}) (or Eq.~(\ref{eq:trans_quditc})). For $k=1$ we understand $\O_{0\to(d-1)}(1)=\tan^{d-1}{r}$ in Eq.~(\ref{eq:firstswitcheslast}).
\end{lem}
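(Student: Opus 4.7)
The plan is to verify the identity by a direct computation in the fermionic occupation-number basis, starting from the one-fermion input and applying each of the three unitaries in turn. The underlying picture is that the composition $\O_{0\to(d-1)}\circ X^{\otimes 2d}\circ(\Xi_d\otimes I_d)\circ(-)^d$ takes the $k=1$ sector of the isometry output $\ket{\Phi}_{AC}$ in Eq.~(\ref{eq:transformed qudit}) (stripped of its $\cos^{d-1}r$ prefactor) into the $k=d$ sector with the roles of $A$ and $C$ interchanged, which is precisely what the RHS records using $c_i^\dg$ instead of $a_i^\dg$.

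First I would write the input as $(-)^d\sum_i\b_i\ket{e_i}_A\otimes\ket{vac}_C$, with $\ket{e_i}$ the multi-rail basis vector carrying a single fermion in mode $i$. Because every mode of $\ket{e_i}_A$ outside position $i$ is empty and $Z$ acts trivially on $\ket{0}$, the operator $\Xi_d\otimes I_d$ is diagonal on each term and attaches a sign $\varepsilon_i\in\{\pm1\}$ determined solely by whether mode $i$ falls on a $Z$-slot or an $I_2$-slot of the pattern $Z\otimes I_2\otimes Z\otimes I_2\otimes\cdots$. The subsequent $X^{\otimes 2d}$ flips every occupation, mapping $\ket{e_i}_A\otimes\ket{vac}_C$ to $\ket{\bar e_i}_A\otimes\ket{1\cdots1}_C$, where $\bar e_i$ is the $(d-1)$-fermion multi-index with a single $0$ at slot $i$. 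Finally $\O_{0\to(d-1)}$ promotes $1=\tan^0r$ to $\tan^{d-1}r$, so the LHS reduces to
\begin{equation*}
\tan^{d-1}r\sum_{i=1}^d(-)^d\varepsilon_i\,\b_i\,\ket{\bar e_i}_A\otimes\ket{1\cdots1}_C.
\end{equation*}

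For the RHS I would read $\ket{\vcn}_A\ket{\vcn}_C$ as running implicitly over every $\vec n$ with $\sum_jn_j=d-1$. By the fermionic creation rule Eq.~(\ref{eq:fermialgebra}), $c_i^\dg\ket{\vcn}_C$ is nonzero only when $n_i=0$, in which case it equals $(-)^{\sum_{j<i}n_j}\ket{1\cdots1}_C$. Since every such $\vec n$ is of the form $\bar e_i$ for a unique index $i$ and the Jordan--Wigner factor there is simply $(-)^{i-1}$, the RHS collapses to $\tan^{d-1}r\sum_i(-)^{i-1}\b_i\ket{\bar e_i}_A\otimes\ket{1\cdots1}_C$. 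Matching against the LHS, the lemma reduces to the term-by-term identity $(-)^d\varepsilon_i=(-)^{i-1}$ for every $i\in\{1,\ldots,d\}$; the alternating $Z/I_2$ pattern defining $\Xi_d$ together with the overall $(-)^d$ is engineered precisely to make this hold.

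The main obstacle, unsurprisingly, is this sign bookkeeping: three independent sign sources---the Jordan--Wigner string accompanying each $c_i^\dg$, the $Z$-string built into $\Xi_d$, and the uniform $(-)^d$ prefactor---must cancel coherently for every $i$. All remaining ingredients (the diagonal action of $\Xi_d$, the bit-flip identity for $X^{\otimes 2d}$, and the trigonometric insertion by $\O_{0\to(d-1)}$) are routine once the basis conventions are pinned down, so it is the parity accounting, and in particular the interplay of the $Z$-pattern of $\Xi_d$ with the $\ket{\bar e_i}$ substring produced by $X^{\otimes 2d}$, that carries the real content of the lemma.
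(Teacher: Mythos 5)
Your overall strategy -- a direct term-by-term computation of both sides in the occupation basis, reducing the lemma to a sign identity -- is exactly the route the paper takes. The problem is in the one place you yourself identify as carrying the real content: the parity accounting on the right-hand side. When you evaluate $c_i^\dg\ket{\vcn}_A\ket{\vcn}_C$ you apply Eq.~(\ref{eq:fermialgebra}) to the $C$ register alone and obtain the phase $(-)^{\sum_{j<i}n_j}=(-)^{i-1}$. But $\ket{\vcn}_A\ket{\vcn}_C$ is a single fermionic state on $2d$ modes (it is built as $\prod_j(a_j^\dg c_j^\dg)^{n_j}\ket{vac}$, with the $a$-modes ordered before the $c$-modes, cf.\ Eq.~(\ref{eq:acorder})), so $c_i^\dg$ must also anticommute past the $d-1$ fermions occupying the $A$ register. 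This contributes an extra factor $(-)^{d-1}$, and the correct phase is $(-)^{d-1+i-1}$, as the paper's own evaluation in Eq.~(\ref{eq:first_and_lastRHS}) shows. Treating the two registers as independent tensor factors with local Jordan--Wigner strings is precisely what the fermionic setting of this lemma forbids.

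The omission is not cosmetic. Your reduction ends at the claimed identity $(-)^d\varepsilon_i=(-)^{i-1}$, which you assert holds "by construction" without checking it. With the actual sign produced by $\Xi_d$ on a single-fermion state, $\varepsilon_i=(-)^i$, your identity reads $(-)^{d+i}=(-)^{i-1}$ and is \emph{false for every even} $d$ (including $d=2$, the erasure-channel case). The corrected right-hand phase $(-)^{d+i-2}$ makes the identity $(-)^{d+i}=(-)^{d+i-2}$ hold for all $d$ and $i$, which is what the lemma needs. So the missing $A$-register string is exactly the discrepancy, and the final verification step -- which you deferred to "engineering" -- is where it would have surfaced. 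Everything else in your write-up (the diagonal action of $\Xi_d$, the bit-flip action of $X^{\otimes 2d}$, the role of $\O_{0\to(d-1)}$, and the collapse of the $\vcn$-sum to the terms $\vcn=\bar e_i$) matches the paper's argument.
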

\begin{proof}
    We prove the identity by a direct calculation. Using properties of the fermionic algebra, the following hold:
    \begin{subequations}\label{eq:first_and_last}
    \begin{align}
        X^{\otimes2d}\circ(\Xi_d\otimes I_d)\Big(\sum_{i=1}^d\b_ia_i^\dg\ket{vac}_{AC}\Big) &=X^{\otimes2d}\circ(\Xi_d\otimes I_d)\circ \Big(\sum_{i=1}^d\b_i\ket{0\dots1_i\dots0}_A\Big)\ket{vac}_{C}\label{eq:first_and_lasta}\\
        &=(-)^iX^{\otimes2d}\circ\Big(\sum_{i=1}^d\b_i\ket{0\dots1_i\dots0}_A\Big)\ket{vac}_{C}\label{eq:first_and_lastb}\\
        &=(-)^i\Big(\sum_{i=1}^d\b_i\ket{1\dots0_i\dots1}_A\Big)\ket{1\dots1}_{C}\label{eq:first_and_lastc}.
    \end{align}
    \end{subequations}
    On the other hand we have
    \begin{equation}\label{eq:first_and_lastRHS}
        \tan^{d-1}{r}\sum_{i=1}^d\b_ic_i^\dg\ket{\vcn}_{A}\ket{\vcn}_{C}
        =\tan^{d-1}{r}\Big(\sum_{i=1}^d(-)^{d-1+i-1}\b_i\ket{1\dots0_i\dots1}_A\Big)\ket{1\dots1}_{C}.
    \end{equation}
    By multiplying Eq.~(\ref{eq:first_and_lasta}) by $(-)^{d-2}$ and exchanging the trigonometric function (represented by the action of $\O_{0\to (d-1)}$) we get the RHS of Eq.~(\ref{eq:first_and_lastRHS}) and therefore also Eq.~(\ref{eq:firstswitcheslast}).
\end{proof}
The purpose of the next lemma is to present two extensively used identities.
\begin{lem}\label{lem:on_sigma_z}
Let $a$ be a fermionic operator. Then, the following identities hold
\begin{subequations}\label{eq:sigmaZX}
    \begin{align}
          Za^\dg  &= -a^\dg Z\label{eq:sigmaZXb} \\
          Xa^\dg  &= aX\label{eq:sigmaZXc}.
    \end{align}
\end{subequations}
\end{lem}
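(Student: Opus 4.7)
The plan is to verify both identities by a direct algebraic calculation. The key observation is that when the Pauli matrices $Z$ and $X$ act on a single fermionic mode associated with $a,a^\dg$, they admit the algebraic expressions
\begin{equation*}
Z = \openone - 2a^\dg a, \qquad X = a + a^\dg,
\end{equation*}
as is readily confirmed by evaluating each side on the two-dimensional occupation basis $\{\ket{vac},\, a^\dg\ket{vac}\}$. With this identification, both identities reduce to short manipulations using only the anticommutation relations from Eq.~(\ref{eq:anticomm}), specifically $\{a,a^\dg\}=\openone$ and the nilpotency $(a^\dg)^2 = a^2 = 0$.

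For Eq.~(\ref{eq:sigmaZXb}) I would compute $Za^\dg = (\openone-2a^\dg a)a^\dg = a^\dg - 2a^\dg(aa^\dg)$, then substitute $aa^\dg = \openone - a^\dg a$ and invoke $(a^\dg)^2 = 0$ to conclude $Za^\dg = a^\dg - 2a^\dg = -a^\dg$. A mirror calculation gives $a^\dg Z = a^\dg(\openone - 2a^\dg a) = a^\dg - 2(a^\dg)^2 a = a^\dg$, hence $Za^\dg = -a^\dg Z$ as claimed.

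For Eq.~(\ref{eq:sigmaZXc}) the argument is even shorter: using $(a^\dg)^2 = 0$ one has $Xa^\dg = (a+a^\dg)a^\dg = aa^\dg$, and using $a^2 = 0$ one has $aX = a(a+a^\dg) = aa^\dg$, so the two sides coincide.

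There is essentially no obstacle here; the entire lemma is a one-line bookkeeping exercise with the fermionic algebra. The only conceptual point worth emphasizing is the identification of the Pauli operators with the appropriate polynomials in $a,a^\dg$, which fixes the convention implicit in the construction of $\Xi_d$ in Lemma~\ref{lem:firstswitcheslast}. Once this convention is in place, none of the parity or ordering subtleties that the paper carefully tracks in multi-mode settings (cf.~Eq.~(\ref{eq:acorder}) and the discussion following Eq.~(\ref{eq:fermionicSWAP})) play any role, because both identities involve only a single fermionic mode.
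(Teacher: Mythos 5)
Your proof is correct and follows essentially the same route as the paper, which simply notes that $X=a+a^\dg$ and $Z=a^\dg a-aa^\dg=2a^\dg a-1$ and invokes the anticommutation relations; you merely write out the one-line manipulations explicitly. The only (immaterial) difference is your sign convention $Z=\openone-2a^\dg a$ versus the paper's $Z=2a^\dg a-\openone$, which does not affect either identity since both are insensitive to an overall sign of $Z$.
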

\begin{proof}
    Directly follows from Eq.~(\ref{eq:anticomm}) once we realize that $X=a+a^\dg$ and $Z=a^\dg a-aa^\dg=2a^\dg a-1=1-2aa^\dg$.
\end{proof}
\begin{lem}\label{lem:AswitchesC}
    The following relation holds
    \begin{equation}\label{eq:AswitchesC}
        \O_{(k-1)\to(d-k)}\circ X^{\otimes2d}\circ(\Xi_d\otimes I_d)
        \circ\Big(\sum_{i=1}^d\b_ia_i^\dg\Big)\ket{\Psi}_{AC}
        =\Big(\sum_{i=1}^d\b_ic_i^\dg\Big)\ket{(\pm)_{\vcm,k,d}\Psi}_{AC},
    \end{equation}
    where $\ket{(\pm)_{\vcm,k,d}\Psi}_{AC}$ is $\ket{\Psi}_{AC}$ from Eq.~(\ref{eq:fermi unitary action}) with some of the summands having a negative sign.
\end{lem}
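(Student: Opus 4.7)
The plan is to verify~\eqref{eq:AswitchesC} blockwise, generalizing the calculation of Lemma~\ref{lem:firstswitcheslast} (which handles precisely the $k=1$ block) to each $k = 1, \dots, d$. First, rewrite the left-hand side by expanding $\bigl(\sum_i \b_i a_i^\dg\bigr) \ket{\Psi}_{AC}$ according to Eq.~\eqref{eq:trans_quditc}: the result is a direct sum whose $k$-th summand consists of kets $\ket{\vec n}_A \ket{\vec m}_C$ with $|\vec n| = k$ and $|\vec m| = k-1$, weighted by $\tan^{k-1} r$ and by the explicit signs $(\pm)_i$ coming from the fermionic algebra~\eqref{eq:fermialgebra}.

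Next, apply $X^{\otimes 2d}$: bit-flipping sends $\ket{\vec n}_A \ket{\vec m}_C \mapsto \ket{\bar{\vec n}}_A \ket{\bar{\vec m}}_C$, converting the $k$-th block (with $k$ fermions in $A$ and $k-1$ in $C$) into a state with $d-k$ fermions in $A$ and $d-k+1$ in $C$. Composing with $\O_{(k-1) \to (d-k)}$ then replaces $\tan^{k-1} r$ by $\tan^{d-k} r$, exactly the coefficient carried by the $(d-k+1)$-th block of $\bigl(\sum_i \b_i c_i^\dg\bigr) \ket{\Psi}_{AC}$ on the right-hand side. The unitary $\Xi_d \otimes I_d$ inserts $Z$'s on the odd $A$ modes; using Lemma~\ref{lem:on_sigma_z} ($Z a^\dg = -a^\dg Z$ and $X a^\dg = a X$), commute these past the $a_i^\dg$'s from the prefactor, so that the creation on $A$ is converted into an annihilation that is absorbed by the subsequent global bit-flip. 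After the flip, the resulting $A$/$C$-occupation pattern coincides, summand by summand, with the one produced by $\sum_i \b_i c_i^\dg$ acting on the $(d-k+1)$-th block of $\ket{\Psi}_{AC}$.

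It remains to collect the signs. They originate from four sources: the $(\pm)_i$ of Eq.~\eqref{eq:trans_quditc}, one factor of $-1$ per occupied odd mode encountered when pushing $Z$'s past $a^\dg$'s, the phase $(-)^{\Delta_{k-1}}$ that each block of $\ket{\Psi}_{AC}$ carries in Eq.~\eqref{eq:fermi unitary action}, and the signs accumulated by anticommuting $c_i^\dg$ past the occupied modes of the target ket. Each of these depends only on $\vec m$, $k$, and $d$, so every summand in the $k$-th block is mapped to a definite $\pm 1$ multiple of a corresponding summand on the right-hand side; this collection of signs is exactly what the statement abbreviates as $(\pm)_{\vcm,k,d}$. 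Since the lemma makes no claim on closed form, it suffices to show that no summand decomposes into a nontrivial superposition.

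The main obstacle is the combinatorial sign bookkeeping across all blocks and all occupation patterns $\vec m$. Lemma~\ref{lem:firstswitcheslast} handled the $k=1$ case tractably because $\vec m = \vec 0$, so only the global $(-)^d$ and the $\Xi_d$ phase on the single nonzero mode of $\vec n$ had to be tracked. For higher $k$ the occupied modes of $\vec m$ in $C$ contribute additional fermionic phases, and one must verify that the trigonometric rescaling by $\O_{(k-1) \to (d-k)}$ together with the block structure of the right-hand side absorb all these phases consistently into a single per-summand $\pm 1$.
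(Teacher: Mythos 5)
Your overall architecture is sound and matches the paper's in outline (blockwise verification, the bit flip exchanging the $k$-th block for the $(d-k+1)$-th, the role of $\O_{(k-1)\to(d-k)}$ in matching the trigonometric coefficients), but there is a genuine gap at the one point where the lemma has real content. The right-hand side of Eq.~(\ref{eq:AswitchesC}) has the form $\big(\sum_{i}\b_ic_i^\dg\big)\ket{(\pm)_{\vcm,k,d}\Psi}_{AC}$: the sign attached to a given summand of $\ket{\Psi}_{AC}$ must be a \emph{single} $\pm1$, common to every value of $i$ in the prefactor, or else the sum $\sum_i\b_ic_i^\dg$ cannot be factored out and the claimed form of the right-hand side fails. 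Two of the four sign sources you enumerate --- the $(\pm)_i$ of Eq.~(\ref{eq:trans_quditc}) and the fermionic phases from anticommuting $c_i^\dg$ into the occupied target ket --- manifestly \emph{do} depend on $i$ (they are phases of the type $(-)^{\sum_{j<i}n_j}$ from Eq.~(\ref{eq:fermialgebra})), so your assertion that ``each of these depends only on $\vec m$, $k$, and $d$'' is false as stated; what is true, and what must actually be proved, is that their \emph{combination} is $i$-independent. Your closing paragraph then explicitly defers this verification (``one must verify that \dots absorb all these phases consistently into a single per-summand $\pm1$''), so the decisive cancellation is never established.

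The paper sidesteps a direct ket-level sign count by an operator reduction: it writes the $k$-th block of $\ket{\Psi}_{AC}$ as $\sum_{N_k}(a_j^\dg c_j^\dg)^{n_j}\ket{vac}$, commutes $X^{\otimes2d}\circ(\Xi_d\otimes I_d)$ through the pair creators using Eqs.~(\ref{eq:sigmaZXf}), (\ref{eq:sigmaZXb}) and (\ref{eq:sigmaZXc}) (each $a_j^\dg c_j^\dg$ turning into $(-1)^j a_jc_j$), applies the already-proven $k=1$ identity of Lemma~\ref{lem:firstswitcheslast} with the modes in $J$ deleted from both sides, and only then lets the annihilators act. The $i$-dependence is thereby confined to the factor $\Upsilon$ of Eq.~(\ref{eq:eq2}), which is computed to equal $(-)^{j-2}(-)^{j-2}=(-)^{j-1}(-)^{j-1}=1$ in both cases $i<j$ and $i>j$ --- this is exactly the cancellation your write-up needs but does not supply. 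To repair your more direct route you would have to show, for each fixed $\vec m$ and each pair of admissible indices $i,i'$, that the product of the $(\pm)_i$ phase, the $\Xi_d$ phase, and the reordering phase of Eq.~(\ref{eq:acorder}) differs between $i$ and $i'$ by an even power of $(-1)$; without that, the step from ``a collection of signed kets'' to the factored form $\big(\sum_i\b_ic_i^\dg\big)\ket{(\pm)_{\vcm,k,d}\Psi}_{AC}$ is unjustified.
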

\begin{rem}
    The indices of the phase function will be described and the function explicitly  written.
\end{rem}
\begin{proof}
    To make the derivation smoother we will first prove a specific part of Eq.~(\ref{eq:AswitchesC}) in which $k=2$
\begin{multline}\label{eq:secswitcheslastbutone}
    \O_{1\to(d-2)}(\tan{r})\circ X^{\otimes2d}\circ(\Xi_d\otimes I_d)\circ
    \Big(\sum_{\substack{i=1 \\ i \ne j}}^d\b_ia_i^\dg\Big)a^\dg_jc^\dg_j\ket{vac}_{AC}\\
    =\tan^{d-2}{r}(-)^{j-1}\Big(\sum_{\substack{i=1 \\ i \ne j}}^d\b_ic_i^\dg\Big)\ket{\vcm^j}_{A}\ket{\vcm^j}_{C}.
\end{multline}
    where $\sum_{i=1}^d m^j_i=d-2$ and $m^j_j = 0$. Let us ignore for a while the function $\O_{1\to(d-2)}$. Using identities~(\ref{eq:sigmaZXa}), (\ref{eq:sigmaZXb}),(\ref{eq:sigmaZXc}) and Lemma~\ref{lem:firstswitcheslast} (note the missing $j$-th summand from both sides of Eq.~(\ref{eq:firstswitcheslast})) the LHS of Eq.~(\ref{eq:secswitcheslastbutone}) reads
    \begin{subequations}
        \begin{align}
        (-)^ja_jc_j\circ X^{\otimes2d}\circ(\Xi_d\otimes I_d)
        \circ\Big(\sum_{\substack{i=1 \\ i \ne j}}^d\b_ia_i^\dg\Big)\ket{vac}_{AC}
        &=(-)^j(-)^{d}a_jc_j\Big(\sum_{\substack{i=1 \\ i \ne j}}^d\b_ic_i^\dg\Big)\ket{\vcn}_{A}\ket{\vcn}_{C}\\
        &=(-)^j(-)^{d}(-)^{d-1}\Upsilon\Big(\sum_{\substack{i=1 \\ i \ne j}}^d\b_ic_i^\dg\Big)\ket{\vcm^j}_{A}\ket{\vcm^j}_{C}\label{eq:eq2},
        \end{align}
    \end{subequations}
    where $\Upsilon=(-)^{j-2}(-)^{j-2}=(-)^{j-1}(-)^{j-1}\equiv1$ depending on whether $i<j$ or $j<i$. Thus we recovered the phase on the RHS of Eq.~(\ref{eq:secswitcheslastbutone}). It is time to justify the presence of $\O_{1\to(d-2)}$ in Eq.~(\ref{eq:secswitcheslastbutone}). Eq.~(\ref{eq:trans_quditc}) indicates that the exponent of $\tan^{k-1}{r}$ coincides with the number of excitations of the $C$~subsystem and is one less than the number of excitations in the $A$~subsystem. From  reasons that will become clear later we would like the exponent of the trigonometric function $\tan^{d-2}{r}$ on the RHS to correspond to the number of excitations of the $A$~subsystem which is indeed equal to $d-2$.

    To prove  Eq.~(\ref{eq:AswitchesC}) we first rewrite Eq.~(\ref{eq:fermi unitary action}) as
    $$
        \sum_{k=1}^{d+1}\tan^{k-1}{r}(-)^{\Delta_{k-2}} \sum_{N_k} \ket{ \vec{n}^J }_A \ket{ \vec{n}^J }_C =  \sum_{k=1}^{d+1}\tan^{k-1}{r} \sum_{N_k}(a_j^{\dg} c_j^{\dg})^{n_j}\ket{ vac }_{AC} = {1\over\cos^d{r}}\ket{\Psi}_{AC},
    $$
    where we have used the notation introduced in Eq.~\eqref{eq:trans_quditc}. We have also introduced the set $J$, for a fixed $N_k$, as $J=\{ j | 1 \le j \le d,\text{ } n_j = 1\}$. Let us investigate the expression for a chosen $k$:
    \begin{subequations}\label{eq:allANDall}
        \begin{align}
            &\O_{(k-1)\to(d-k)}(\tan^{k-1}{r})\circ X^{\otimes2d}\circ(\Xi_d\otimes I_d)\circ
                \Big(\sum_{\substack{i=1 \\ i \notin J}}^d\b_ia_i^\dg\Big)\sum_{N_k}(a_j^{\dg}c_j^{\dg})^{n_j}\ket{ vac }_{AC} \label{eq:allANDalla} \\
            &=\O_{(k-1)\to(d-k)}(\tan^{k-1}{r})\circ\sum_{N_k} ((-1)^j a_j c_j)^{n_j}
                \circ X^{\otimes2d}\circ(\Xi_d\otimes I_d)\circ\Big(\sum_{\substack{i=1 \\ i \notin J}}^d\b_ia_i^\dg\Big)\ket{vac}_{AC}\\
            &=\tan^{d-k}{r}(-)^{d}\sum_{N_k}  ((-1)^j a_j c_j)^{n_j}
                \Big(\sum_{\substack{i=1 \\ i \notin J}}^d\b_ic_i^\dg\Big)\ket{\vec{n'}}_{A}\ket{\vec{n'}}_{C}\\
            &=\tan^{d-k}{r}(-)^{d}\sum_{N_k}  (-)^{j n_j}(-)^{(k-1)(d-1)}
                \Upsilon^{k-1}(-)^{\Delta_{k-2}}
                \Big(\sum_{\substack{i=1 \\ i \notin J}}^d\b_ic_i^\dg\Big)\ket{\vcm^J }_{A}\ket{\vcm^J }_{C}\\
            &=\tan^{d-k}{r}\Big(\sum_{\substack{i=1 \\ i \notin J}}^d\b_ic_i^\dg\Big)\sum_{N_k} (\pm)_{\vec{m},k,d} \ket{\vcm^J}_{A}\ket{\vcm^J}_{C}
            \label{eq:allANDalle},
        \end{align}
    \end{subequations}
    In the first equality of Eq.~\eqref{eq:allANDall} we used Eqs.~(\ref{eq:sigmaZXf}),~(\ref{eq:sigmaZXb}) and~(\ref{eq:sigmaZXc}), and in the second equality Lemma~\ref{lem:firstswitcheslast} was used with the summands corresponding to the set $J$ removed from both sides of Eq.~(\ref{eq:firstswitcheslast}). Here $\sum_in_i'=d-1$ holds. In the third equality, Eqs.~(\ref{eq:sigmaZXe}),~(\ref{eq:acorder}) and the properties of the fermionic algebra leading to $\Upsilon$ in Eq.~(\ref{eq:eq2}) were used and we introduced a vector $\ket{\vcm^{J} }$ such that $\sum_{i=1}^d m^{J}_i=d-k$ and $m^{J}_j = 0$ if $j \in J$.  Finally in the last equation we collected all phases into a single function $(\pm)_{\vec{m},k,d}$.

    To prove the lemma we realize that the LHS of Eq.~(\ref{eq:AswitchesC}) is Eq.~(\ref{eq:allANDalla}) summed over $k$
    \begin{equation}
        \cos^d{r}\sum_{k=1}^{d}\Bigg[
        \O_{(k-1)\to(d-k)}(\tan^{k-1}{r})\circ X^{\otimes2d}\circ(\Xi_d\otimes I_d)\circ\Big(\sum_{\substack{i=1 \\ i \notin J}}^d\b_ia_i^\dg\Big)\sum_{N_k} (a_j^{\dg} c_j^{\dg})^{n_j}\Bigg]
        \ket{vac}_{AC}
    \end{equation}
    (note that we sum only to $d$ since the $(d+1)$-th summand disappeared). The RHS of Eq.~(\ref{eq:AswitchesC}) is just a sum of Eq.~(\ref{eq:allANDalle}) over $k$
    \begin{equation}
        \cos^d{r}\sum_{k=1}^{d}\tan^{d-k}{r}\Big(\sum_{\substack{i=1 \\ i \notin J}}^d\b_ic_i^\dg\Big)\sum_{N_k}(\pm)_{\vcm,k,d}\ket{\vcm^J}_{A}\ket{\vcm^J}_{C}
        =\Big(\sum_{i=1}^d\b_ic_i^\dg\Big)\ket{(\pm)_{\vcm,k,d}\Psi}_{AC}
    \end{equation}
    Notice that if we exchange $(d-k)$ for $(k-1)$ in $\tan^{d-k}{r},N_k,\ket{\vcm^J}_{A}$ and  $\ket{\vcm^J}_{C}$ such that $\sum_{i=1}^d m^{J}_i=k-1$ we indeed get $\Big(\sum_{i=1}^d\limits\b_ic_i^\dg\Big)\ket{\Psi}_{AC}$ up to the phase function.
\end{proof}
\begin{rem}
    The important fact about Eq.~(\ref{eq:allANDalle}) is that the sign depends on chosen $d,k$ and $N_k$ so it is in general different for each summand over $N_k$ but common for each sum over $i$.
\end{rem}
\begin{lem}\label{lem:ZIZI}
    Following the notation of Definition~\ref{defi:grass} we label $\g^{(d)}_C=\Tr{A}\circ\,\Phi_{AC}$. Then we have
    \begin{equation}\label{eq:gamma_ACrelation}
        \O_{(k-1)\to(d-k)}\circ W\circ\g^{(d)}_A=\g^{(d)}_C,
    \end{equation}
    where $W$ is a unitary transformation.
\end{lem}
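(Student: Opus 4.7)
The identity should follow directly by applying Lemma~\ref{lem:AswitchesC} to the isometry output. Recall from Eq.~(\ref{eq:trans_qudita}) that $\ket{\Phi}_{AC}=(1/\cos r)\big(\sum_i\b_i a_i^\dg\big)\ket{\Psi}_{AC}$, and observe from Eq.~(\ref{eq:fermi unitary action}) that $\ket{\Psi}_{AC}$ is manifestly invariant under the formal interchange $a_i\leftrightarrow c_i$ (equivalently, under the $A\leftrightarrow C$ swap). Lemma~\ref{lem:AswitchesC} then implies that $\O_{(k-1)\to(d-k)}\circ X^{\otimes 2d}\circ(\Xi_d\otimes I_d)\ket{\Phi}_{AC}$ equals, up to the per-$N_k$ signs $(\pm)_{\vcm,k,d}$, the $A\leftrightarrow C$--swapped isometry output $(1/\cos r)\big(\sum_i\b_i c_i^\dg\big)\ket{\Psi}_{AC}$.

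I would take $W:=X^{\otimes d}\Xi_d$ acting on the $A$ subsystem and write $X^{\otimes 2d}\circ(\Xi_d\otimes I_d)=W\otimes W_C$ with $W_C:=X^{\otimes d}$ acting on $C$. Passing to density matrices, conjugation of $\Phi_{AC}$ by $W\otimes W_C$ followed by $\Tr{C}$ makes the two copies of $W_C$ cancel (unitarity of $W_C$ under the partial trace), so the left-hand side becomes $\O\circ W\,\g^{(d)}_A\,W^\dg$, which matches the shape demanded by Eq.~(\ref{eq:gamma_ACrelation}). On the other side, the $A\leftrightarrow C$ swap intertwines $\Tr{C}$ with $\Tr{A}$, and therefore produces $\g^{(d)}_C$.

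The main obstacle, and where I expect almost all of the work to concentrate, is verifying that the phase function $(\pm)_{\vcm,k,d}$ does not spoil the identity after taking the partial trace. Two facts are available. First, by the Remark following Lemma~\ref{lem:AswitchesC}, the sign is common across the $i$-summation and varies only with $N_k$, so the output pure state decomposes as $\sum_{N_k}s(N_k)\ket{\vcm^J}_A\otimes\ket{w_{N_k}}_C$, with $\ket{\vcm^J}_A$ supported in the $(d-k)$-fermion sector of $A$. Second, orthogonality of fermion-number sectors kills all $\Tr{A}$ cross terms between distinct $k$'s, so any surviving sign pattern is labelled purely by $\vcm^J$ inside a single sector. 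That residual pattern can then be absorbed into a diagonal sign operator on the fermion-number eigenbasis of $A$, which is itself unitary and can be folded into the definition of $W$---since Lemma~\ref{lem:ZIZI} only demands that \emph{some} unitary $W$ satisfy the identity, this adjustment is permitted. Combining the two sides gives Eq.~(\ref{eq:gamma_ACrelation}).
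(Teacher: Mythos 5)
Your proposal is correct and rests on the same engine as the paper's proof, namely Lemma~\ref{lem:AswitchesC}, but you run the partial trace in the opposite direction, and this changes where the work lands. The paper starts from $\g^{(d)}_C=\Tr{A}\circ\,\Phi_{AC}$, substitutes the inverted form of Lemma~\ref{lem:AswitchesC}, and traces over $A$: there the sign function $(\pm)_{\vcm,k,d}$ multiplies the mutually orthogonal basis states $\ket{\vcm^J}_A$ of the traced-out subsystem, so it enters only as $|(\pm)_{\vcm,k,d}|^2=1$ and disappears for free (the ``convex sum'' argument); the surviving local unitary on $C$ is the paper's $W$, and a final $A\leftrightarrow C$ relabelling of $\ket{\Psi}_{AC}$ identifies the remaining trace with $\g^{(d)}_A$. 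You instead conjugate $\Phi_{AC}$ by $W\otimes W_C$ and trace over $C$; since the $C$-parts $\big(\sum_i\b_ic_i^\dg\big)\ket{\vcm^J}_C$ overlap for distinct $N_k$ within a fixed fermion-number sector, the signs do \emph{not} cancel under $\Tr{C}$, and you correctly compensate by factoring the sign pattern out as a diagonal unitary $D=\sum_{\vcm}(\pm)_{\vcm,k,d}\kbr{\vcm}{\vcm}$ on the $A$ Fock basis (legitimate because the sign depends only on the $A$ basis state, so the signed state is exactly $(D\otimes I)$ applied to the sign-free swapped state) and absorbing $D$ into $W$, which the lemma permits since only existence of some unitary is claimed. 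Both routes are valid; the paper's choice of tracing over $A$ is the more economical one precisely because it makes the phase function square away, whereas yours needs the extra absorption step --- which you identified as the crux and resolved correctly. One minor slip: the cross terms you invoke fermion-number orthogonality for are the $\Tr{C}$ cross terms between distinct $k$ sectors, not $\Tr{A}$ ones.
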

\begin{proof}
    Lemma~\ref{lem:AswitchesC} is powerful since it allows us to compare the outputs of Grassmann channels with their complementary outputs $\g^{(d)}_C$. Indeed, from the form of $\ket{\Phi}_{AC}$ we can deduce the explicit form of the block matrices from which $\g^{(d)}_A$ is composed of (see Eq.~(\ref{eq:Grasschanneloutput})). However, if we wanted to compare $\g^{(d)}_A$ with $\g^{(d)}_C$ given by unitary $U_{AC}$ in Eqs.~(\ref{eq:transformed qudit}) it would be a difficult task. The advantage of Lemma~\ref{lem:AswitchesC} is that we don't even need to know $\g^{(d)}_A$ or $\g^{(d)}_C$ explicitly to find the relation between them.

    Let us rewrite the previous lemma result in the following way
    \begin{equation}\label{eq:result of the ACswitch lemma}
      \Big(\sum_{i=1}^d\b_ia_i^\dg\Big)\ket{\Psi}_{AC}
      =\O_{(k-1)\to(d-k)}\circ(\Xi_d\otimes I_d)\circ X^{\otimes2d}\circ\Big(\sum_{i=1}^d\b_ic_i^\dg\Big)\ket{(\pm)_{\vcm,k,d}\Psi}_{AC}.
    \end{equation}
    Recall that $\O_{(k-1)\to(d-k)}\circ(\Xi_d\otimes I_d)\circ X^{\otimes2d}$ is an involution and $\O_{(k-1)\to(d-k)}$ is a scalar function. Note that $\Xi_d\otimes I_d$ commutes or anticommutes with $X^{\otimes2d}$ depending on the specific form of $\Xi_d$. We incorporate this sign change into $(\pm)_{\vcm,k,d}$. Tracing out the $A$ subsystem we have
    \begin{subequations}\label{eq:Atraced}
    \begin{align}
        \g_C^{(d)}=\Tr{A}\circ\Big(\sum_{i=1}^d\b_ia_i^\dg\Big)\ket{\Psi}_{AC}
        &=\O_{(k-1)\to(d-k)}\circ W\circ\Tr{A}\circ\Big(\sum_{i=1}^d\b_ic_i^\dg\Big)\ket{(\pm)_{\vcm,k,d}\Psi}_{AC}\label{eq:Atraceda}\\
        &=\O_{(k-1)\to(d-k)}\circ W\circ\Tr{A}\circ\Big(\sum_{i=1}^d\b_ic_i^\dg\Big)\ket{\Psi}_{AC}\label{eq:Atracedb}\\
        &=\O_{(k-1)\to(d-k)}\circ W\circ\Tr{C}\circ\Big(\sum_{i=1}^d\b_ia_i^\dg\Big)\ket{\Psi}_{AC}\label{eq:Atracedc}\\
        &=\O_{(k-1)\to(d-k)}\circ W\circ\g_A^{(d)}\label{eq:Atracedd}.
    \end{align}
    \end{subequations}
    The first equality holds since all the unitaries in Eq.~(\ref{eq:result of the ACswitch lemma}) act locally and $\O_{(k-1)\to(d-k)}$ is again a scalar function (the unitary $W$ appears as the result of tracing over unitarily-locally transformed state $\ket{\Phi}_{AC}$). The sign ambivalence in the second equality is irrelevant because tracing over $A$ means creating of a convex sum of states belonging to the $C$~subsystem so the phase disappears, the third equality is a simple permutation argument due to the symmetry of $\ket{\Psi}_{AC}$ between the modes $A$ and $C$ and in Eq.~(\ref{eq:Atracedd}) we invoke the definition of $\g_A^{(d)}$.
\end{proof}
Let us formally define the complementary Grassmann channel based on Lemma~\ref{lem:ZIZI} and Definition~\ref{defi:grass}:
\begin{defi}\label{defi:grasscompl}
    The $d$-dimensional complementary Grassmann channel $\G^c_d$ is the quantum channel defined by the isometry $V_{\G_d}$ as $\G^c_d(\psi_{A'})= \Tr{A}\circ\,V_{\G_d}\circ\psi_{A'}$. The action of the channel on an input
    qudit is given by
\begin{equation}\label{eq:Grasscomplchanneloutput}
    \G^c_d:\psi_{A'}\mapsto\g^{(d)}_C=\cos^{2(d-1)}{r}\bigoplus_{k=1}^{d}\tan^{2(d-k)}{r}{{d-1\choose k-1}}
    W\circ\chi^{(d)}_{k},
\end{equation}
where $\chi^{(d)}_k$ has been introduced in Eq.~(\ref{eq:Grasschanneloutput}).
\end{defi}
For the purpose of proving degradability, $W$ is a harmless unitary matrix independent on $k$ so we may just ignore it. But $\O_{(k-1)\to(d-k)}$ is not a completely positive map so degradability remains to be proven.
\begin{thm}\label{thm:degradability}
    All Grassmann channels $\G_d$ from Eq.~(\ref{eq:Grasschanneloutput}) are degradable for $r\in\left[0,\pi/4\right]$.
\end{thm}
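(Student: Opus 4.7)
The strategy is to exploit the structural identity established in Lemma~\ref{lem:ZIZI}: $\g_A^{(d)}$ and $\g_C^{(d)}$ share the same direct-sum decomposition over the blocks $\chi_k^{(d)}$ (up to the common unitary $W$), differing only through the substitution
\[
 p_k=\binom{d-1}{k-1}\sin^{2(k-1)}r\cos^{2(d-k)}r
 \;\longmapsto\;
 \tilde p_k=\binom{d-1}{k-1}\sin^{2(d-k)}r\cos^{2(k-1)}r.
\]
Both sequences are binomial distributions on the number of pair excitations, but with the success probabilities $\sin^2 r$ and $\cos^2 r$ swapped. For $r\in[0,\pi/4]$ one has $\sin^2 r\le\cos^2 r$, so $(\tilde p_k)$ is a stochastic ``thickening'' of $(p_k)$, and the existence of a CPTP degrading map is the quantum lift of this classical observation.

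I would construct $\M$ in two layers. The outer layer is a block-wise unitary that absorbs the $W$ of Lemma~\ref{lem:ZIZI}; the inner layer $\M_Q$, acting on an input labelled by block $k$, redistributes it across blocks $k'\ge k$ according to the transition probabilities
\[
 q_{k,k+j}=\binom{d-k}{j}\bigl(1-\tan^2 r\bigr)^{j}\bigl(\tan^2 r\bigr)^{d-k-j},\qquad j=0,\dots,d-k.
\]
The row sums equal $1$ by the binomial theorem (so $\M$ is trace preserving), a short binomial identity shows $\sum_{k\le k'}p_k\,q_{k,k'}=\tilde p_{k'}$ (so the output weights are exactly those of $\g_C^{(d)}$), and $q_{k,k+j}\ge 0$ iff $\tan^2 r\le 1$, which is precisely the hypothesis $r\in[0,\pi/4]$ and pins down the tight range claimed in the theorem.

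The quantum realization of $\M_Q$ that I have in mind is the natural one: adjoin a fresh $d$-mode vacuum ancilla $B$ and act on $A$ and $B$ by a second product of fermionic Bogoliubov transformations $\bigotimes_{i=1}^d U_{A_iB_i}$ with parameter $s$ satisfying $\sin^2 s=1-\tan^2 r$ (real exactly for $r\le\pi/4$), then trace out $A$. By the very same $(a_i^\dg)^2=0$ reasoning that produced Eq.~(\ref{eq:greatcommuter}), the second squeezing can only create additional pairs on those $A_i$ unoccupied in the given block $k$, so the distribution of new pairs on block $k$ is $\binom{d-k}{j}\sin^{2j}s\cos^{2(d-k-j)}s$, which reproduces $q_{k,k+j}$. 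The $SU(d)$-covariance of Corollary~\ref{cor:covariance}, together with Theorem~\ref{thm:covariance}, transports the $\sldc$ coefficients $n_\a$ carrying the $\psi$-dependence from the input block into the output block with no distortion, so after the outer unitary layer the output on each block is of the form required by Definition~\ref{defi:grasscompl}.

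The main obstacle I expect is the fermionic sign bookkeeping in this last step -- verifying that the state on $B$ produced after the partial trace is genuinely unitarily equivalent to $W\chi_{k+j}^{(d)}$ with the \emph{same} $W$ that was isolated in Lemma~\ref{lem:ZIZI}, rather than with some $k$-dependent unitary that would destroy the direct-sum structure. The identities (\ref{eq:sigmaZXa})--(\ref{eq:acorder}) and the intertwining arguments already used in Lemmas~\ref{lem:firstswitcheslast}--\ref{lem:AswitchesC} are exactly what makes this bookkeeping tractable. As a sanity check, for $d=2$ the construction collapses to the textbook degrading map of the qubit erasure channel with extra erasure probability $1-\tan^2 r$, and the boundary $r=\pi/4$ (where $s=0$ and no extra squeezing is applied) reproduces the familiar degradability threshold of that channel.
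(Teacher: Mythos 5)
Your classical skeleton is sound and, at the level of the block weights, it is a genuinely different (and finer) solution than the one the paper uses. You correctly identify $p_k$ and $\tilde p_k$ as the $\mathrm{Binomial}(d-1,\sin^2 r)$ and $\mathrm{Binomial}(d-1,\cos^2 r)$ distributions; your transition matrix is row-stochastic; the convolution identity $\sum_{k\le k'}p_k\,q_{k,k'}=\tilde p_{k'}$ does hold (it reduces to $\binom{d-1}{k-1}\binom{d-k}{k'-k}=\binom{d-1}{k'-1}\binom{k'-1}{k-1}$ followed by the binomial theorem); and nonnegativity of all entries is exactly $\tan^2 r\le 1$, i.e.\ $r\in[0,\pi/4]$. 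The paper instead posits the much cruder ansatz $\M:\g^{(d)}_A\mapsto q_1\g^{(d)}_A+\sum_{k\ge2}q_k\chi^{(d)}_k$, which turns weight matching into the triangular system $q_1p_1=\tilde p_1$, $q_1p_k+q_k=\tilde p_k$, solved by $q_1=\tan^{2(d-1)}r$ and $q_k=\binom{d-1}{k-1}\cos^{2(d-1)}r\big(\tan^{2(d-k)}r-\tan^{2(d+k-2)}r\big)\ge0$ on the same interval. Both stochastic matrices do the job; yours has the virtue of a physical interpretation as a second squeezing, the paper's the virtue of a one-line solution whose components (the identity, and maps whose block-$k'$ output is $\chi^{(d)}_{k'}$ with the coefficients $n_\a$ guaranteed by Theorem~\ref{thm:covariance}) require no further fermionic analysis.

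The genuine gap is the one you flag yourself, and it is more than sign bookkeeping. First, as stated your realization traces out the wrong subsystem: after $\bigotimes_i U_{A_iB_i}$ new pairs are created only on the modes unoccupied in block $k$, so tracing out $A$ leaves on $B$ a state with $j$ fermions supported on the originally empty modes; it lands in the $j$-fermion sector, not in anything unitarily equivalent to the $(k+j)$-fermion block carrying $\chi^{(d)}_{k+j}$. To land in the right sector you must either keep $A$ (tracing out $B$) or compose with the occupation-flipping $X^{\otimes d}$ of Lemma~\ref{lem:firstswitcheslast}, and in either case you must then prove that for \emph{every} $k$ and $j$ the resulting block state equals $W\chi^{(d)}_{k+j}W^\dg$ with one fixed, $k$-independent $W$ and with the same expansion coefficients $n_\a$ -- the partial trace over the which-modes-were-filled record a priori kills coherences between configurations, and no argument is given that it does not. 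Until that is established you have matched only the classical weights, which is necessary but not sufficient for $\M\circ\G_d=\G^c_d$. Your $d=2$ sanity check is correct but uninformative here, since for $d=2$ there is a single non-trivial block and the sector/coherence issue is invisible.
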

\begin{rem}
    Note that the degradability of a qubit erasure channel on the whole interval is recovered for $d=2$. This corresponds to $p\in[0,1/2]$ in Eq.~(\ref{eq:transformed qubitc}) as is valid for a `standard' qubit erasure channel~Eq.~(\ref{eq:erasure_isometry})~\cite{erasure_channel}.
\end{rem}
\begin{figure}[t]
\begin{center}
      \epsfig{file=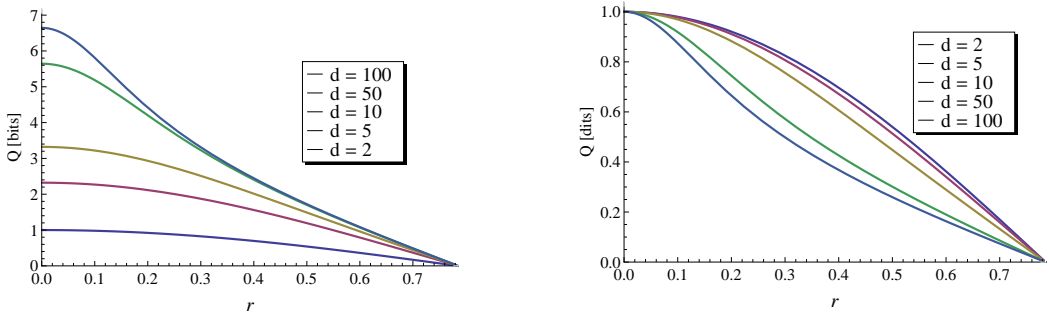,width=14.5cm}
      \caption{Quantum capacity for Grassmann channels for $d=2, 5, 10, 50$ and $100$. In the left plot the capacity is calculated using the base two logarithm and in the right plot using the base $d$ logarithm.  The curves follow the order in the legend.}
      \label{fig:quantcapGrass}
\end{center}
\end{figure}
\begin{proof}
    We will prove the theorem by a direct construction of the degrading map. Rewriting Eqs.~(\ref{eq:Grasschanneloutput}) and (\ref{eq:Grasscomplchanneloutput}) we get
    \begin{align}
      \g^{(d)}_A &= \bigoplus_{k=1}^{d} p_k\chi^{(d)}_{k}\equiv\bigoplus_{k=1}^{d} p_k\G_{d,k}(\psi_{A'}) \\
      \tilde\g^{(d)}_C
      &= W\g^{(d)}_CW^\dg = \bigoplus_{k=1}^{d} \tilde p_{k}\chi^{(d)}_{k},
    \end{align}
    where $p_k=\cos^{2(d-1)}{r}\tan^{2(k-1)}{r}{{d-1\choose k-1}}$ and $\tilde p_{k}=\cos^{2(d-1)}{r}\tan^{2(d-k)}{r}{{d-1\choose k-1}}$. Recall that for the purpose of the degrading map construction we may work with $\tilde\g^{(d)}_C$ instead of $\g^{(d)}_C$. We assume the existence of the following degrading map
    \begin{equation}\label{eq:degradingmap}
            \M:\g^{(d)}_A\mapsto q_1\g^{(d)}_A+\sum_{k=2}^{d}q_k\chi^{(d)}_{k}= \sum_{k=1}^{d}
            \tilde p_{k}\chi^{(d)}_{k}\equiv\tilde\g^{(d)}_C.
    \end{equation}
    In order for $\M$ to be a completely positive map we have to show that $0\leq q_k$ and $\sum_{k=1}^dq_k=1$ for $r\in\left[0,\pi/4\right]$. For all $d$ we get from Eq.~(\ref{eq:degradingmap}) the following set of $d$ linear equations
    \begin{equation}\label{eq:eqsset}
    \begin{split}
      q_1 p_1 &= \tilde p_1 \\
      q_1 p_2 + q_2 &= \tilde p_2 \\
       &\  \vdots \\
      q_1 p_d + q_d &= \tilde p_d.
    \end{split}
    \end{equation}
    The set is easily solvable. The first equation gives us $q_1=\tan^{2(d-1)}{r}$ which we plug into the remaining equations. We get
    \begin{equation}\label{eq:q_i}
        q_k={d-1\choose k-1}\cos^{2(d-1)}{r}\left(\tan^{2(d-k)}{r}-\tan^{2(d+k-2)}{r}\right)
    \end{equation}
    for $k=2\dots d$. Since for $r\in[0,\pi/4]$ the tangent function is monotonously increasing and $0\leq\tan{r}\leq1$ holds as well we may conclude that for $k=1\dots d$ all $q_k$ are positive. Finally, summing the left and right side of the equation set and using $\sum_{k=1}^d p_k=\sum_{k=1}^d\tilde p_k=1$ we find that $\sum_{k=1}^dq_k=1$.
\end{proof}

We might proceed to the calculation of the quantum capacity of the Grassmann channels. Due to the degradability the quantum capacity formula Eq.~(\ref{eq:quantcap}) reduces to the optimized coherent information~Eq.~(\ref{eq:optimcoh}). Furthermore, according to Theorem~\ref{thm:covariance} the Grassmann channels are covariant. Therefore, the supremum in Eq.~(\ref{eq:optimcoh}) is achieved for a maximally mixed input qudit and the quantum capacity formula reads
\begin{equation}\label{eq:quantcapGrass}
    Q(\G_d)={1\over d}\cos^{2(d-1)}{r}\sum_{k=1}^dk\binom{d}{k}\log{k}\big(\tan^{2(d-k)}{r}-\tan^{2(k-1)}{r}\big).
\end{equation}
The plots for various $d$ can be found in Fig.~\ref{fig:quantcapGrass} where we have plotted the capacities using both the base two (on the left) and base $d$~logarithm (on the right). The reason for the presence of the base $d$~logarithm is to compare the capacity in a more fair way.

\section{Classical Capacity}\label{sec:classcap}

We first present a simple generalization of the characterization theorem derived in~\cite{WolfEisert} (Theorem~1).
\begin{thm}[\cite{WolfEisert}]\label{thm:wolfeisert}
    The quantum channel $T:\B\big(\mathbb{C}^d\big) \rightarrow \B\big(\mathbb{C}^{d'}\big)$ is of the form
    $$
    T(\rho)=\dfrac{I_{d'}- m M(\rho)}{d'-m},
    $$
    where $M$ is a positive, linear and trace-preserving map such that there exists a state $\rho_0$ where $mM(\rho_0)$ is a projection of rank $m$ if and only if the $\a$-R\'enyi minimal output entropy $H^\a_{min}(T)=\min_\rho{H^\a(T(\rho))}$ is $\a$-independent.
\end{thm}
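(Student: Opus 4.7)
The plan is to dispatch the biconditional using two workhorses: Schur-concavity of the R\'enyi entropy $H^\alpha$ on probability vectors, and the sharp fact that a distribution $\nu$ satisfying $H^0(\nu)=H^\infty(\nu)$ must be uniform on its support. For the easy direction $(\Rightarrow)$ I would first evaluate $T(\rho_0)$: since $mM(\rho_0)=:P$ is a rank-$m$ projection, $T(\rho_0)=(I_{d'}-P)/(d'-m)$ has spectrum consisting of $(d'-m)$ copies of $1/(d'-m)$ together with $m$ zeros, yielding $H^\alpha(T(\rho_0))=\log(d'-m)$ for every $\alpha$. To upgrade this to a lower bound valid for all $\rho$, note that positivity and trace preservation of $M$ give $mM(\rho)\le I_{d'}$, capping every eigenvalue of $T(\rho)$ at $1/(d'-m)$. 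The ``flat on a $(d'-m)$-dimensional subspace'' spectrum of $T(\rho_0)$ majorizes any probability vector obeying this cap, and Schur-concavity then delivers $H^\alpha(T(\rho))\ge\log(d'-m)$, so the minimum is $\alpha$-independent.

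For $(\Leftarrow)$, let $E$ be the common value of $H^\alpha_{\min}(T)$ and pick any minimizer $\rho^*$ of $H^0$. The support of $T(\rho^*)$ then has some dimension $s$ with $\log s=E$; since at least one nonzero eigenvalue of $T(\rho^*)$ is $\ge 1/s$ by pigeonhole, $H^\infty(T(\rho^*))\le\log s=E$, while the hypothesis also supplies $H^\infty(T(\rho^*))\ge H^\infty_{\min}(T)=E$. Equality of the two R\'enyi endpoints forces $T(\rho^*)=P/s$ for some rank-$s$ projection $P$. Setting $m:=d'-s$ and $M(\rho):=(I_{d'}-sT(\rho))/m$ makes linearity and trace preservation of $M$ immediate (the trace of $M(\rho)$ equals $(d'-s)/m=1$), and $mM(\rho^*)=I_{d'}-P$ is the required rank-$m$ projection, supplying the witness $\rho_0:=\rho^*$ postulated in the theorem. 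Positivity of $M$ amounts to $\|T(\rho)\|_\infty\le 1/s$ for every $\rho$, which is precisely the assumption $H^\infty_{\min}(T)=E$; inverting the definition of $M$ recovers the claimed form of $T$.

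The main obstacle is the converse: the hypothesis only controls the \emph{value} of $H^\alpha_{\min}(T)$ rather than any particular minimizer, so extracting a state whose output has the rigid ``uniform-on-a-subspace'' form from this ostensibly scalar constraint is the single non-routine step. It is resolved by the above squeezing argument, in which an $H^0$-minimizer is shown to also minimize $H^\infty$, making the two endpoint R\'enyi entropies coincide on its image and pinning the spectrum.
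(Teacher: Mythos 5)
The paper does not actually prove this statement: Theorem~\ref{thm:wolfeisert} is imported wholesale from the cited Wolf--Eisert reference, and the only in-paper content is the remark that allowing $d'\neq d$ is an ``almost trivial'' generalization. Your argument is a correct, self-contained proof, and it follows essentially the same strategy as the original source: the forward direction via majorization by the flat spectrum (Schur-concavity of $H^\alpha$ plus the eigenvalue cap $1/(d'-m)$ coming from positivity of $M$), and the converse via squeezing two R\'enyi orders at a single minimizer to force a uniform-on-support output, from which $M$ is reconstructed by inverting the affine relation. Two small points to be explicit about: your converse invokes the hypothesis at the endpoints $\alpha=0$ and $\alpha=\infty$, so you should either state that the $\alpha$-independence is assumed over $[0,\infty]$ (the standard reading, and what Wolf--Eisert use) or replace the endpoint pair by two finite orders $\alpha_0<\alpha_1$ and use that $H^\alpha(\nu)$ is constant on an interval only for distributions uniform on their support; and the degenerate case $s=d'$ (i.e.\ $m=0$, where $T$ is completely depolarizing and the defining formula for $M$ divides by zero) should be dispatched separately, though it is trivial.
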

\begin{rem}
The (almost trivial) generalization lies in setting $d'\not=d$.
\end{rem}
We now show that the blocks from which all the Grassmann channels are composed fulfill the required conditions.
\begin{lem}
Every block of the qudit Grassmann channel $\G_d: \B\big(\mathbb{C}^d\big) \rightarrow \B\big(\mathbb{C}^{2^d-1}\big)$ has the following form:
\begin{align}
\label{eq:wolfform}
\G_{d,k}(\rho) = \dfrac{I_{d'}- m M_{d,k}(\rho)}{d'-m}
\end{align}
where $d'={d\choose k}$ is the dimension of the output space. Moreover, for all input pure states $\rho_0$,  $m M_{d,k}(\rho_0)$ is a projection of rank $m$
\end{lem}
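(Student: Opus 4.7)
The strategy is to read off the correct $m$ and the candidate map $M_{d,k}$ from the action of $\G_{d,k}$ on a single computational-basis input, and then lift the resulting identity to all pure inputs using the $SU(d)$ covariance of Corollary~\ref{cor:covariance}.

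First I would set $\b_i=1$ (with the other coefficients zero) in Eq.~(\ref{eq:trans_quditc}) and trace over $C$. Inspection shows that in the $k$-fermion block the only $A$-kets that appear are the $\binom{d-1}{k-1}$ basis states in which mode $i$ is occupied and the remaining $k-1$ fermions are distributed over the other $d-1$ modes; their $C$-partners are mutually orthogonal, so the reduced density matrix is maximally mixed on that subspace. Hence
\begin{equation*}
\G_{d,k}(\kbr{i}{i})=\frac{1}{\binom{d-1}{k-1}}\,\Pi_i^{(k)},
\end{equation*}
where $\Pi_i^{(k)}$ is the rank-$\binom{d-1}{k-1}$ projector onto that subspace.

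With $d'=\binom{d}{k}$, the natural choice is then $m=\binom{d-1}{k}$, so Pascal's identity $\binom{d}{k}=\binom{d-1}{k}+\binom{d-1}{k-1}$ gives $d'-m=\binom{d-1}{k-1}$. I would simply define
\begin{equation*}
M_{d,k}(\rho):=\frac{I_{d'}-(d'-m)\,\G_{d,k}(\rho)}{m},
\end{equation*}
which makes Eq.~(\ref{eq:wolfform}) hold tautologically. Linearity of $M_{d,k}$ is inherited from $\G_{d,k}$, and $\Tr{}\G_{d,k}(\rho)=1$ gives $\Tr{}M_{d,k}(\rho)=1$. For positivity of $M_{d,k}$ I must bound the top eigenvalue of $\G_{d,k}(\rho)$ by $1/(d'-m)$. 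By Corollary~\ref{cor:covariance} the spectrum of $\G_{d,k}$ on any pure state coincides with the spectrum on a basis state, which by the computation above is $\binom{d-1}{k-1}$ eigenvalues equal to $1/\binom{d-1}{k-1}$ (and zeros); convexity of $\la_{\max}$ under mixtures of positive operators then extends the bound to arbitrary $\rho$.

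For the projection claim, the basis-state computation directly gives $mM_{d,k}(\kbr{i}{i})=I_{d'}-\Pi_i^{(k)}$, a projector of rank $\binom{d}{k}-\binom{d-1}{k-1}=\binom{d-1}{k}=m$. For a general pure input $\ket{\psi}=U\ket{i}$ with $U\in SU(d)$, Corollary~\ref{cor:covariance} supplies a unitary $V_k$ on the $k$-th output block satisfying $\G_{d,k}(U\kbr{i}{i}U^\dg)=V_k\,\G_{d,k}(\kbr{i}{i})\,V_k^\dg$, so $mM_{d,k}(\kbr{\psi}{\psi})=V_k(I_{d'}-\Pi_i^{(k)})V_k^\dg$ remains a rank-$m$ projector. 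The only genuinely nontrivial step is the initial combinatorial identification of $\Pi_i^{(k)}$ and of $m=\binom{d-1}{k}$ directly from Eq.~(\ref{eq:trans_quditc}); once these are in place, the covariance argument reduces everything else to inspection.
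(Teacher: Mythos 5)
Your proposal is correct and follows essentially the same route as the paper: compute $\G_{d,k}$ on a single basis-state input to identify the rank-$\binom{d-1}{k-1}$ projector, choose $m=\binom{d-1}{k}$ via Pascal's identity, and transport the result to arbitrary pure inputs by the $SU(d)$ covariance of Corollary~\ref{cor:covariance}. Your extra verification that $M_{d,k}$ is positive and trace-preserving is a welcome addition the paper leaves implicit, but it does not change the structure of the argument.
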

\begin{proof}
Consider the $k$-th block of the qudit Grassmann channel with input of the form $\kbr{1}{1}$. Then the output state will have the following form:
\begin{align}
{\phi^{(k)}_A}&= \dfrac{1}{ {d-1 \choose k-1} } \sum_{s_i} \kbr{1\,s_2 \hdots s_d}{1\,s_2 \hdots s_d}_A
\end{align}
where we are using the convention $\sum_{i=2}^d s_i = k-1$ and normalized the block. Now suppose we let $m={ d-1 \choose k }$ and we know the dimension of the output space for the $k$-th block is equal to ${ d \choose k }$, then $d'-m = { d \choose k } - { d-1 \choose k } = { d-1 \choose k-1 }$. Therefore,
\begin{align}
I_{d'}-(d'-m)\G_{d,k}(\kbr{1}{1}) &= I_{d'} -  \sum_{s_i} \kbr{1\,s_2 \hdots s_d}{1\,s_2 \hdots s_d}\\
\label{eq:rankargument}
&= \sum_{t_i} \kbr{0\,t_2 \hdots t_d}{0\,t_2 \hdots t_d}.
\end{align}
with the convention now being $\sum_{i=2}^d t_i = k$. The matrix in~Eq.~(\ref{eq:rankargument}) has rank $m={d-1\choose k}$. Thus with an input of the form $\rho_0=\kbr{1}{1}$, we have found a matrix $M_{d,k}(\rho_0)$ that satisfies the condition in~Eq.~(\ref{eq:wolfform}).

Now, in order to generalize this result to arbitrary pure inputs, one can use the $SU(d)$ covariance of the Grassmann channel presented in Corollary~\ref{cor:covariance}. Assume $R_{d,r}$ to be an $r$-dimensional unitary representation of $SU(d)$. Then due to covariance the following holds:
\begin{align}
    \G_{d,k}(\rho) &= \G_{d,k}( R_{d,d}\circ\rho_0)= R_{d,d'}\circ\G_{d,k} (\rho_0).
\end{align}
Thus the following chain of equalities hold:
\begin{subequations}
    \begin{align}
        R_{d,d'}\circ\big(mM_{d,k}(\rho_0)\big)&=R_{d,d'}\circ\big((I_{d'} - (d'-m) \G_{d,k}(\rho_0)\big)\\
        &= I_{d'} - (d'-m) R_{d,d'}\circ \G_{d,k}(\rho_0)  \\
        &= I_{d'} - (d'-m) \G_{d,k}(R_{d,d}\circ\rho_0) \\
        &= mM_{d,k}(R_{d,d}\circ\rho_0).
    \end{align}
\end{subequations}
\begin{figure}[h]
\begin{center}
      \epsfig{file=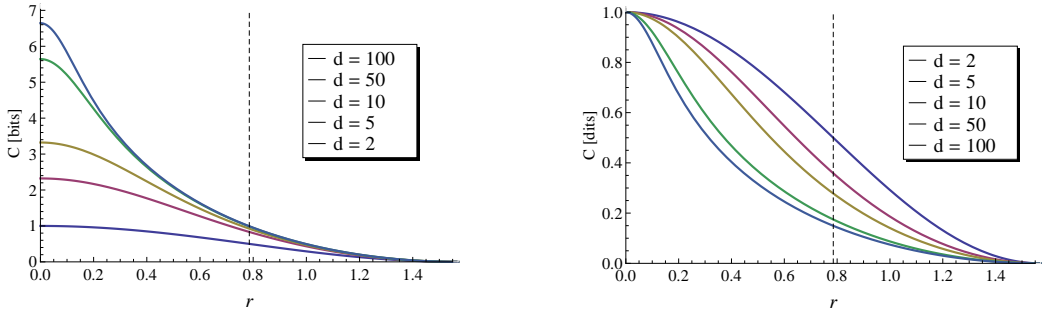,width=14.5cm}
      \caption{Classical capacity for Grassmann channels for $d=2, 5, 10, 50$ and $100$. The vertical dashed lines at $r=\pi/4$ label the point where the quantum capacity in Fig.~\ref{fig:quantcapGrass} equals zero. In the left plot the capacity is calculated using the base two logarithm and in the right plot using the base $d$ logarithm.  The curves follow the order in the legend.}
      \label{fig:classcapGrass}
\end{center}
\end{figure}
The last equality follows from defining $mM_{d,k}(R_{d,d}\circ\rho_0)$ in this way, since it has rank $m$ since it is just equal to a rotated state of rank $m$. Thus every block, after normalization, has the form of~Eq.~(\ref{eq:wolfform}).
\end{proof}
\begin{cor}
Grassmann channels have a direct sum form Eq.~(\ref{eq:Grasschanneloutput}). Hence from Lemma~3 in~\cite{classcap_additivity_clone} it follows that all Grassmann channels have the Holevo capacity additive. From Eqs.~(\ref{eq:classcap}) and (\ref{eq:holcap}) we get
\begin{equation}
    C(\G_d)=\sup_{\{p(x),\s_{x,A'}\}}{\bigg[H(A)_{\G_d(\s_{XA'})}
    -\sum_xp(x)H(A)_{\G_d(\s_{x,A'})}\bigg]}.
\end{equation}
Consequently, exploiting the result from~\cite{CovCh,hadamard} for covariant channels, the classical capacity is given by
\begin{align}\label{eq:classcapGrass}
     C(\G_d)&=H(\{p_k\})+\sum_{k=1}^dp_k\log{d\choose k} - H(A)_{\G_d(\rho_0)}\nn\\
     &=\log{d}-\cos^{2(d-1)}{r}\sum_{k=1}^d\tan^{2(k-1)}{r}{d-1\choose k-1}\log{k}.
\end{align}
\end{cor}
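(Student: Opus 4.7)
The plan is to reduce the regularized capacity to a single-letter Holevo quantity and then solve the resulting optimization by combining the direct-sum structure with $SU(d)$ covariance and the explicit block form identified in the preceding lemma.

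First, I would invoke the direct-sum decomposition $\G_d=\bigoplus_{k=1}^d p_k\G_{d,k}$ of Definition~\ref{defi:grass} together with the preceding lemma, which places each block $\G_{d,k}$ in the Wolf--Eisert form of Theorem~\ref{thm:wolfeisert}. By that theorem the $\a$-R\'enyi minimum output entropy of each $\G_{d,k}$ is $\a$-independent, so Lemma~3 of \cite{classcap_additivity_clone} applies: the Holevo capacity is additive across tensor copies of $\G_d$, the regularization in Eq.~(\ref{eq:classcap}) collapses, and we are reduced to computing $C_{\rm Hol}(\G_d)$.

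Next, I would use the $SU(d)$ covariance established in Corollary~\ref{cor:covariance}. As in the standard argument for covariant channels (cf.~\cite{CovCh,hadamard}), the supremum in Eq.~(\ref{eq:holcap}) is attained by any ensemble of pure states whose orbit averages to the maximally mixed state, giving
\begin{equation*}
    C_{\rm Hol}(\G_d)=H(A)_{\G_d(I_d/d)}-H(A)_{\G_d(\rho_0)},
\end{equation*}
for any pure $\rho_0$. Using the direct-sum form, the first entropy decomposes as $H(\{p_k\})+\sum_k p_k\log\binom{d}{k}$ because covariance forces $\G_{d,k}(I_d/d)=I_{\binom{d}{k}}/\binom{d}{k}$. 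For the second entropy I would exploit the explicit Wolf--Eisert block shape displayed in Eq.~(\ref{eq:wolfform}): with $\rho_0=\kbr{1}{1}$, the $k$-th block output has $\binom{d-1}{k-1}$ nonzero eigenvalues of equal weight, hence entropy $\log\binom{d-1}{k-1}$, and by covariance the same holds for every pure input. Thus $H(A)_{\G_d(\rho_0)}=H(\{p_k\})+\sum_k p_k\log\binom{d-1}{k-1}$.

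Finally, subtracting the two expressions the $H(\{p_k\})$ terms cancel and the binomial identity $\binom{d}{k}/\binom{d-1}{k-1}=d/k$ produces
\begin{equation*}
    C(\G_d)=\sum_{k=1}^d p_k\log\frac{d}{k}=\log d-\sum_{k=1}^d p_k\log k,
\end{equation*}
which on substituting $p_k=\cos^{2(d-1)}r\,\tan^{2(k-1)}r\binom{d-1}{k-1}$ yields the stated Eq.~(\ref{eq:classcapGrass}). The only mildly nontrivial step is verifying that the Wolf--Eisert hypotheses hold uniformly in $k$ with the right values of $d'$ and $m$ so that the $\a$-independence applies to each block, and keeping track of the normalization of each block so that the block entropy does not inherit a spurious $\log p_k$ contribution; both are handled by the preceding lemma and the covariance argument, so the rest of the derivation is purely algebraic.
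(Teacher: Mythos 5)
Your proposal is correct and follows essentially the same route as the paper: additivity of the Holevo capacity from the direct-sum structure together with the Wolf--Eisert form of each block (Lemma~3 of~\cite{classcap_additivity_clone}), then $SU(d)$ covariance to reduce the optimization to a maximally mixed input minus the output entropy of an arbitrary pure state, and finally the computation $H(A)_{\G_d(\rho_0)}=H(\{p_k\})+\sum_k p_k\log\binom{d-1}{k-1}$ with the identity $\binom{d}{k}/\binom{d-1}{k-1}=d/k$. Your write-up is if anything slightly more explicit than the paper's about why the block entropies take the stated values.
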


\section{Physical implications}\label{sec:consequences}

\subsection*{Bosons, fermions and non-inertial observers}

It is often claimed that there is a fundamental difference between the entanglement behavior of maximally entangled states built upon fermions and bosons in a relativistic setting~\cite{unruh-fermi}. Namely, provided that a uniformly accelerated observer has one half of an initially maximally entangled state it is found that in the infinite acceleration limit the bosonic entanglement disappears while the fermionic entanglement partially persists. Based on this na\"ive approach it is concluded that in the infinite limit entangled fermionic states might be useful for various quantum-informational protocols where a certain amount of shared entanglement is usually needed.

However, our capacity results suggest something different at least for the purpose of quantum communication between an inertial and noninertial observer. For the purpose of sending quantum messages the ultimate measure of a channel's capability to transmit the information is its quantum capacity which has nothing to do with any particular entanglement measure. We found that there is no qualitative difference between the behavior of the quantum capacity for the qudit Grassmann channel (fermions) and its bosonic equivalent introduced in Ref.~\cite{CMP} where we study the bosonic version of transformation Eq.~(\ref{eq:exponential}). This led to the definition of the qudit Unruh channels as the bosonic counterpart of the qudit Grassmann channel. One of the resolved problems is the quantum capacity of the qudit Unruh channels which we can compare with the Grassmann channels. As a result we find that quantum capacities for both channels converge to zero as we approach the infinite acceleration limit. So from the viewpoint of quantum Shannon theory entangled resources based on bosons or fermions are equally useful. The only difference is in the capacity value for a finite acceleration. To fairly compare the capacities we rewrite Eq.~(\ref{eq:quantcapGrass}) as a function of $w=\tan^2{r}$. Using $\cos^2{r}=1/(1+\tan^2{r})$ we get
\begin{subequations}
\begin{align}\label{eq:equantcap_Grass_rewritten}
    Q(\G_d)&={1\over d}\left({1\over1+w}\right)^{d-1}\sum_{k=1}^dk{d\choose k}\log{k}\big(w^{d-k}-w^{k-1}\big)\\
    &=\dfrac{1}{(1+w)^{d+1}} \sum_{k=0}^{d-1} w^k {d-1 \choose k} \log{\dfrac{d-k}{k+1}},
\end{align}
\end{subequations}
where the details of derivation leading to the second line can be found in Appendix~\ref{app:capratio}.
\begin{figure}[t]
\begin{center}
      \epsfig{file=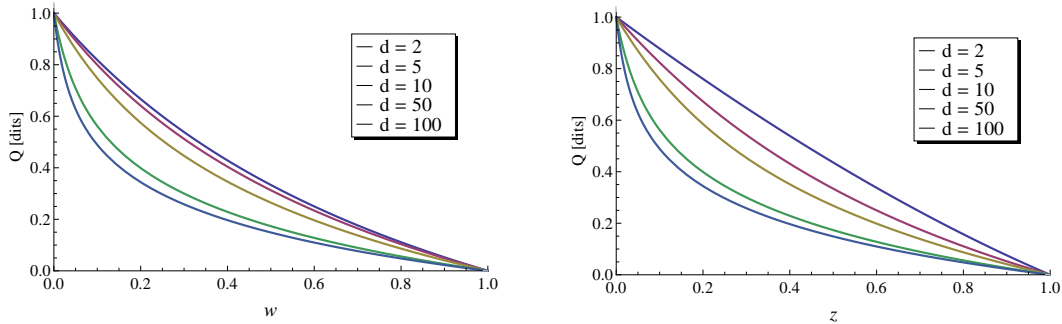,width=14.5cm}
      \caption{Quantum capacity for the Grassmann channel (on the left) and the Unruh channel (on the right) taken from~\cite{CMP}. The parameters $w=\tan^2{r}$ and $z=\tanh^2{r}$ are chosen to compare the capacities for the same proper acceleration. The capacity of the Unruh channel is in general higher but importantly for the infinite acceleration limit ($w=z=1$) both capacities are equal zero. The curves follow the order in the legend.}
      \label{fig:grassVSunruhQcap}
\end{center}
\end{figure}
This expression can be directly compared to the one we get for the qudit Unruh channel $\U_d$~\cite{CMP}
\begin{equation}\label{eq:equantcap_Unruh}
    Q(\U_d)={1\over d}(1-z)^{d+1}\sum_{k=1}^\infty
    k{d+k-1\choose k}\log{d+k-1\over k}z^{k-1}.
\end{equation}
The Unruh channel outperforms the Grassmann channel for low-dimensional inputs as we can see in Fig.~\ref{fig:grassVSunruhQcap}. In the infinite acceleration limit $z=w\to1$ the ratio of the quantum capacities remains finite for all $d$
\begin{equation}\label{eq:capratio_def}
    r_d=\lim_{z\to1}\dfrac{Q(\mathcal{G}_d)}{Q(\mathcal{U}_d)}
    = \dfrac{d \ln{d}}{d-1} \dfrac{1}{2^{d-1}}
    \sum_{k=0}^{\lfloor \frac{d-1}{2} \rfloor}(d-1-2k) {d-1 \choose k} \log{\dfrac{d-k}{k+1}}.
\end{equation}
The details of derivation are presented in Appendix~\ref{app:capratio}. We can therefore conclude that the capacities asymptotically behave in the same way. We have not found an analytical form for the sum in $r_d$ but it is clear that $r_2=\ln{2}\leq r_d<r_{d+1}$ for all $d$. Moreover, numerical simulations suggest that $r_\infty=1$.

As a closing comment note that due to the local similarities between the Rindler and Schwarzschild spacetime (explicitly spelled out, for example, in~\cite{unruh-schwarz}) the similar conclusion regarding the quantum capacities also holds in the black hole scenario.

\subsection*{Grassmann channels and transpose-depolarizing channels}

Let us define the family of qudit transpose-depolarizing channels~\cite{TransDepol}.
\begin{defi}
    Let $\T:\B(\bbC^d)\to\B(\bbC^d)$ be a map defined as
    \begin{equation}\label{eq:transdepCh}
        \T(\s)=t\bar\s+(1-t)\openone/d
    \end{equation}
    acting on normalized density matrices $\s$. The bar denotes complex conjugation in a given basis and the map is a quantum channel in the following interval of $t$:
    \begin{equation}\label{eq:t_parameter}
        -{1\over d-1}\leq t\leq{1\over d+1}.
    \end{equation}
\end{defi}
\begin{rem}
    The transpose-depolarizing channel for $t= -1/(d-1)$ is known as the $d$-dimensional Werner-Holevo channel~\cite{WH}.
\end{rem}
\begin{lem}
    The complementary channels of the $\G_{d,2}$ channels presented in Definition~\ref{defi:grass} are the $d$-dimensional Werner-Holevo channels.
\end{lem}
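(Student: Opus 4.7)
The plan is to compute $\G^c_{d,2}(\psi_{A'})$ explicitly on a pure input $\ket{\psi}=\sum_{i=1}^d\b_i\ket{i}$ and recognize the result as $(I_d-\bar\s)/(d-1)$, i.e., the Werner-Holevo point $t=-1/(d-1)$ of Eq.~(\ref{eq:transdepCh}).

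First I isolate the $k=2$ sector of the isometry output Eq.~(\ref{eq:trans_quditc}). In this sector the $C$ subsystem carries exactly one fermion, so its Hilbert space is naturally identified with $\bbC^d$ via the multi-rail basis $\{\ket{j}_C\}_{j=1}^d$, matching the output dimension expected of a channel $\bbC^d\to\bbC^d$. For each $j$ the conditioned $A$-state is $\ket{\phi_j}_A=\sum_{i\ne j}(\pm)_i\b_i\ket{1_j,1_i}_A$, where $\ket{1_j,1_i}_A$ denotes the two-fermion basis state with occupations at sites $j$ and $i$; by Eq.~(\ref{eq:fermialgebra}), applied to the state with only site $j$ occupied, the sign equals $-1$ when $j<i$ and $+1$ when $j>i$.

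Next I compute $\g^{(d)}_C\propto\sum_{j,j'}\langle\phi_{j'}|\phi_j\rangle_A\,\ket{j}\bra{j'}_C$. Diagonal entries are immediate: $\langle\phi_j|\phi_j\rangle=\sum_{i\ne j}|\b_i|^2=1-|\b_j|^2$, which is exactly $(I_d-\bar\s)_{jj}$ for pure $\s=\psi$. For $j\ne j'$ only one basis vector---$\ket{1_j,1_{j'}}_A$, the unique one occupying both $j$ and $j'$---is common to $\ket{\phi_j}$ and $\ket{\phi_{j'}}$, so the overlap collapses to the product of the coefficient $(\pm)_{j'}\b_{j'}$ appearing in $\ket{\phi_j}$ and the conjugate coefficient $(\pm)_j\bar\b_j$ appearing in $\ket{\phi_{j'}}$. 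Since exactly one of $j<j'$ and $j'<j$ holds, the two signs multiply to $-1$ and therefore $\langle\phi_{j'}|\phi_j\rangle=-\bar\b_j\b_{j'}=-\bar\s_{jj'}$.

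Assembling the diagonal and off-diagonal entries and normalizing via $\sum_j(1-|\b_j|^2)=d-1$ yields $\G^c_{d,2}(\psi_{A'})=(I_d-\bar\s)/(d-1)$; linearity extends the identification to all density matrices. The main obstacle is the fermionic sign bookkeeping at the off-diagonal step---the two phases arising independently from Eq.~(\ref{eq:fermialgebra}) must combine to $-1$ for every pair $(j,j')$. It is precisely this $-1$ (rather than $+1$) that pins the image channel to the extreme Werner-Holevo value $t=-1/(d-1)$ at the negative end of the admissible interval Eq.~(\ref{eq:t_parameter}) and distinguishes the fermionic case from the symmetric bosonic analogue, which would land at the opposite end.
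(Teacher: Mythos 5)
Your proposal is correct and follows essentially the same route as the paper: both isolate the $k=2$ sector of the isometry output, use the fermionic antisymmetry of the two-particle $A$-states to produce the crucial relative minus sign between the $(j,j')$ and $(j',j)$ contributions, and then trace over $A$. The only difference is presentational --- the paper reads off the Kraus operators $\K_{ij}=\tfrac{1}{\sqrt{d-1}}(\kbr{j}{i}-\kbr{i}{j})$ and cites them as the known Werner--Holevo form, whereas you compute the output Gram matrix and match the closed form $(I_d-\bar\s)/(d-1)$, i.e.\ the $t=-1/(d-1)$ point of Eq.~(\ref{eq:transdepCh}); these are equivalent, and your sign bookkeeping (the product $(\pm)_j(\pm)_{j'}=-1$ for every pair) is robust to the overall per-block sign conventions visible in the paper's $d=3$ and $d=4$ examples.
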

\begin{proof}
    We take the corresponding part of the isometry output Eq.~(\ref{eq:trans_quditc}) for $k=2$ and rewrite it as
\begin{equation}
    \begin{split}
        \Phi^{(2)}_{AC}&=\cos^{d-1}{r} \tan{r} \sum_{j=1}^d\sum_{\genfrac{}{}{0pt}{}{i=1}{i\not=j}}^{d-1}(\pm)_{i}\b_i
        \ket{0\dots1_j\dots1_i\dots0}_A\ket{0\dots 1_j\dots0}_C\\
        &\equiv\sum_{j=1}^d\sum_{\genfrac{}{}{0pt}{}{i=1}{i\not=j}}^{d-1}(\pm)_{i}\b_i
        \ket{[ji]}_A\ket{j}_C.
    \end{split}
\end{equation}
    Note that $\ket{[ji]}=-\ket{[ij]}$ so we can further write (leaving out the irrelevant trigonometric functions)
    \begin{equation}
        \tilde{\Phi}^{(2)}_{AC}
        =\sum_{j=1}^d\sum_{i>j}^{d-1}\ket{[ji]}_A\big(-\b_i\ket{j}+\b_j\ket{i}\big)_C.
    \end{equation}
    Considering an input pure state $\ket{\psi}_{A'}=\sum_{l=1}^d\b_l\ket{l}_{A'}$ the isometry of $\G_{d,2}$ reads
    \begin{equation}\label{eq:isometryk2}
        V_{\G_{d,2}}=\sum_{j=1}^d\sum_{i>j}^{d-1}
        \Big(-\d_{il}\ket{[ji]}_A\kbr{j}{l}_{CA'}+\d_{jl}\ket{[ji]}_A\kbr{i}{l}_{CA'}\Big).
    \end{equation}
    Tracing over $A$ followed by normalizing by $\sqrt{1/{d-1\choose k-1}}=\sqrt{1/(d-1)}$ and changing the overall sign leads to ${d\choose2}$ Kraus operators of the form
    \begin{equation}\label{eq:Gd2Kraus}
        \K_{ij}={\sqrt{1\over d-1}}\big(\kbr{j}{i}-\kbr{i}{j}\big),
    \end{equation}
    where $1\leq i<j\leq d$. These are well known as the Kraus operators for the $d$-dimensional Werner-Holevo channels~\cite{WH}.
\end{proof}
This result brings us two interesting points. As mentioned earlier,  in~\cite{capregion_had} we study the capacity region of the bosonic version of the transformation from~Eq.~(\ref{eq:exponential}) known as the qudit Unruh channel~\cite{CMP}. One of the results is the characterization of the complementary channels of the Unruh channel. Their structure is also block-diagonal as in the present case and the first nontrivial complementary block for each $d$ is the transpose-depolarizing channel~Eq.~(\ref{eq:transdepCh}) for $t=1/(d+1)$. This is a peculiar observation and raises a number of questions. First of all, why for fermions we get the transpose-depolarizing channel from one end of the allowed interval~Eq.~(\ref{eq:t_parameter}) and for bosons from the other end? Also, does some sort of intermediate statistics interpolating between bosons and fermions correspond to the whole interval? One of the obvious possibilities are anyons whose appearance is not limited just to the two-dimensional world~\cite{haldane}.

The identification of the $d$-dimensional Werner-Holevo channel also implies that the Grassmann channels do not belong to the class of Hadamard channels~\cite{capregion_had,capregion_qudits} -- the channels whose complementary channel is entanglement-breaking. The reason is that the Werner-Holevo channels are known not to be entanglement-breaking. The transpose-depolarizing channels are entanglement-breaking for $-1/(d^2-1)\leq t$~\cite{TransDepol}. Henceforth, at least one of the complementary blocks of all Grassmann channels has negative partial transpose.

\section{Conclusions}

We have introduced a new class of quantum channels to the group with computable classical and quantum capacities, the Grassmann channels. Such channels are rare in quantum Shannon theory since the calculation of the classical and quantum capacities requires an optimization over an infinite number of channel uses. The Grassmann channels' isometric extension is physically well motivated and stems from the Bogoliubov transformation which preserve the canonical anticommutation relations corresponding to the Fermi-Dirac statistics.

In order to determine the quantum capacity of the Grassmann channels, we have shown that these channels are degradable and have explicitly calculated the degrading map. Combining this with the result that the Grassmann channels are covariant we were able to give a closed form for the quantum capacity. A different technique was used in order to calculate the classical capacity of the set of Grassmann channels. We showed that each block, in the block diagonal form of the matrix, had a particular form~\cite{WolfEisert} which enabled the calculation of their minimum output entropy. Exploiting this result allowed us to calculate the classical capacity of the Grassmann channels.

To appreciate the capacity results from the physical point of view we compare the Grassmann channels with the Unruh channels studied elsewhere~\cite{CMP,capregion_qudits}. They share a close analog to the Grassmann channels in the sense that their isometric extension appears formally identical. The difference is that for the Unruh channels the isometry is built upon the operators obeying the canonical commutation relations (relevant to the Bose-Einstein statistics) inducing a completely different class of channels. In particular, the Unruh channels belong to the set of Hadamard channels - the channels whose complementary channels are entanglement breaking. The set of Grassmann channels do not fall in this class. This is the first example to our knowledge of a set of channels that do not belong to the Hadamard class that have a computable and at the same time nonzero classical and quantum capacity.

The main physical consequences also come from the comparison between the fermionic and bosonic case. One of the discussed physical motivations for investigating this type of channel is that both the fermionic and bosonic Bogoliubov transformation occurs in the study of particle production in uniformly accelerated frames. Fermionic entanglement between two parties who originally share a maximally entangled state exists to an extent even in the limit of infinite acceleration of one of the participants. Contrary to the fermionic case, bosonic entanglement vanishes in the infinite acceleration limit. Based on this observation it is believed that there is a difference between these types of resources. The result of our work suggests that at least for quantum communication purposes there is no difference whatsoever. The Unruh channel does demonstrate a greater quantum capacity than the Grassmann channel when the acceleration parameter is small, however in the infinite acceleration limit, both quantum capacities tend to zero.

There exists another connection between the Unruh and Grassmann channels. Both channels are direct sums of other quantum channels and so are their complementary channels. The first non-trivial block of the complementary channel for a given dimension $d$ belongs to the family of qudit transpose-depolarizing channels which is a single-parameter family of quantum channels. Interestingly, this holds both for the Grassmann channels and Unruh channels. The difference is that the complementary block of the Unruh channel corresponds to the transpose-depolarizing channel with the parameter at the upper limit of the allowed parameter interval, while the first nontrivial complementary block of the Grassmann channel corresponds to the lower limit of the allowed interval (the Werner-Holevo channel). This immediately raises the question: Why does the fermionic case occupy one end of the interval while the bosonic case occupy the other? Perhaps and even more interestingly, could there be intermediate statistics model (like anyons, for example) that would explain intermediate values of the interval? Another direction in which future research could be done is to consider the case of coupled fermions with a more active role of the spin variable and one can ask how the additional spin variable  alters the  Grassmann channels and whether the classical and quantum capacity is still calculable. Finally, the implications of the results here obtained to quantum information protocols inspired by Cooper pairs in solid state or atomic physics scenarios deserve an independent detailed study.

\acknowledgments
K.~B. acknowledges support from the Office of Naval Research (grant No. N000140811249) and appreciates comments made by Patrick Hayden and Omar Fawzi. T.~J.~would like to acknowledge the support of NSERC through the USRA Award Program and the Alexander Graham Bell Canada Graduate Scholarship.

\appendix
\section{Geometric picture of the $\sldc$ Lie algebra representations}\label{app:Liealgebras}

In the $\sldc$ case there may be several mutually commuting operators.  A maximal linearly-independent commuting set of operators of a (semi-simple) Lie algebra is called a Cartan subalgebra.  Once a Cartan subalgebra has been chosen we can use the common eigenvectors to label the basis vectors of an irreducible
representation.

\begin{defi}
An $r$-tuple $\mathbf{\alpha} =(\a_1,\ldots,\a_r)$ of complex numbers is
called a root if: (i) not all the $\a_i$ are zero, (ii) there is an
element $E_\a$ of $\sldc$ such that
\begin{equation}\label{eq:cartanweyl}
[H_i,E_\a]=\a_iE_\a.
\end{equation}
\end{defi}

The set $\{H_i,E_\a\}$ is called the Cartan-Weyl basis~\cite{liealgebras}. The Cartan subalgebra of $\sldc$ has dimension $r=d-1$; we say that the rank of the Lie algebra is $r$.  We write $\mathbf{H}=(H_1,\ldots,H_r)$ for the Cartan subalgebra generated by the elements $\{H_1,\ldots,H_r\}$ of the Lie algebra; these elements are assumed to be independent. Among all the roots there is a class of special roots called simple roots.
\begin{defi}
A root is called a simple root if it cannot be written as a linear combination of other positive roots.
\end{defi}
\begin{defi}
  If $\rho$ (where $\rho:\sldc\to GL(V)$ for some $V$) is a representation   of $\sldc$ then a $r$-tuple $\mathbf{\mu} =(\mu_1,\ldots,\mu_r)$ of complex numbers is a weight for $\rho$ if there is a nonzero vector   $\psi\in V$ such that $\psi$ is an eigenvector of each $H_i$ with eigenvalue $\mu_i$.
\end{defi}
If $\mathbf{\mu}$ is a weight and $\psi$ a weight vector for $\rho$ and $\mathbf{\alpha}$ is a root then
\begin{equation}\label{eq:weighteq}
\rho(H_i)\rho(E)\psi = (\mu_i + \a_i) \rho(E)\psi.
\end{equation}

In short, $E$ changes all the eigenvalues of the Cartan operators and it
creates a new weight vector (or kills the weight vector).  The root is a
vector in the weight space that points in the direction in which the
weights are changing. The $E$ operators are called shift or raising and lowering operators.
Roughly speaking, the positive roots correspond to raising operators while the negative roots to lowering operators. We classify the irreducible representations by the \emph{highest possible value of the weight}.
For general semi-simple Lie algebras we do exactly the same thing once we have a suitable order on the weights in order to define the right notion of highest weight.

A special example of the Cartan-Weyl basis is the Chevalley-Serre basis~\cite{liealgebras}. Two aspects make this basis special. (i) The step operators are associated to simple roots and (ii) the normalization is chosen such that the roots are integers. Unless explicitly stated we work in this basis due to its accessible geometric interpretation.

To give the operators a geometric interpretation we will work with a specific matrix representation of the $\sldc$ algebra. Let us define $E_{ij}$, where $1\leq i\not=j\leq d$, as the matrix having one where the $i$-th row and the $j$-th column intersect and the rest are zeros. Furthermore we define a diagonal matrix $H_{i}$ in which the $i$-th diagonal entry is 1, the $(i+1)$-th diagonal entry is $-1$ and the rest are zeros. If we assume $j=i+1$ the following set $\{H_i,E_{ij},E^\dg_{ij}\}$ forms the Chevalley-Serre basis for $\sldc$. More explicitly, for $d=2$ we get
\begin{subequations}\label{eq:sl2C}
\begin{align}
    H_1&=\begin{pmatrix}
           1 & 0 \\
           0 & -1 \\
         \end{pmatrix}\\
    E_{12}&=\begin{pmatrix}
             0 & 1 \\
             0 & 0 \\
           \end{pmatrix}\\
    E^\dg_{12}&=\begin{pmatrix}
          0 & 0 \\
          1 & 0 \\
    \end{pmatrix}.
\end{align}
\end{subequations}
Therefore in this basis we have
\begin{subequations}\label{eq:sl2cinChevalley}
\begin{align}
    [H_i,E_{ij}]&=2E_{ij}\\
    [H_i,E^\dg_{ij}]&=-2E^\dg_{ij}.
\end{align}
\end{subequations}
The simple roots are elements of a vector space dual to the one spanned by elements of the Cartan subalgebra.

This structure opens the door to an insightful geometric picture in terms of the so-called root space diagram. The dual space will be called the space of roots. For the $\sldc$ Lie algebra the space is $(d-1)$-dimensional and the simple root vectors defined as an $r$-tuple of simple roots form a non-orthogonal basis. Let us call the basis spanning the space of roots the root basis. We are aware of the overuse of the expressions root and root vectors. The terminology is not stabilized and differ in various textbooks. Also, the root space diagrams are sometimes called weight diagrams. It can be shown that each consecutive simple root vectors subtend the angle $2\pi/3$. We rewrite Eq.~(\ref{eq:cartanweyl}) as
\begin{align}
[H_i,E_{ij}]&=(\mu^{(i)}_i-\mu^{(j)}_i)E_{ij},
\end{align}
where $\mu^{(j)}_i$ are called fundamental weights. The word fundamental reflects the fact that we are dealing only with simple roots. We explicitly rewrite Eq.~(\ref{eq:weighteq}) as the spectral decomposition
$$
H_{i}=\sum_{j=1}^d\mu^{(j)}_i\kbr{\psi_j}{\psi_j},
$$
where $j$ labels the $j$-th component of a $d$-component spinor $\ket{\psi}$. The important role played by the (fundamental) weights is that they are coordinates of the eigenvectors in the space of roots.
\begin{figure}[t]
\begin{center}
    \resizebox{12cm}{4cm}{\includegraphics{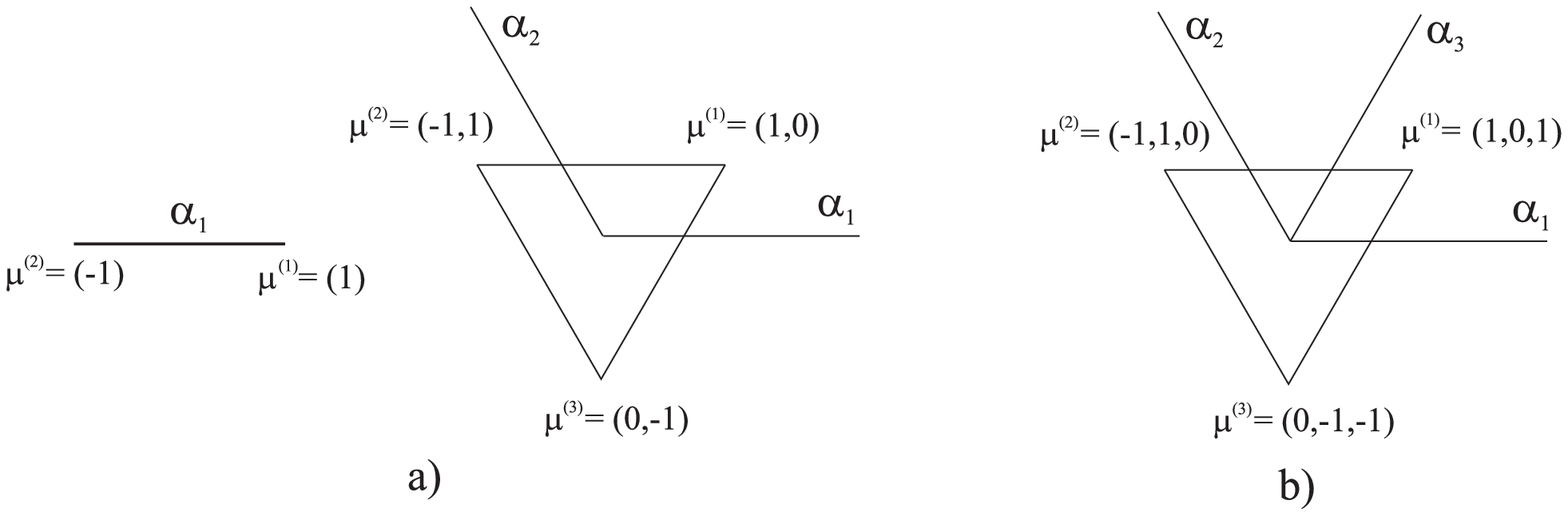}}
    \caption{(a) We illustrate the spaces of roots for $\sltw$ and $\slthr$. The basis vectors are simple roots and are indicated by $\a_i$. For the weights (vertices) we explicitly write down their coordinates in this basis.  We can read off the generators of the Cartan subalgebra from the weights. (b) It is sometimes helpful to introduce an overcomplete basis. The position of $\a_3$ reflects the relation $H_3=H_1+H_2$ for $H_3$ in Eq.~(\ref{eq:H3}) and it is not a simple root.}
    \label{fig:GrassX}
\end{center}
\end{figure}
Fig.~\ref{fig:GrassX}a illustrates the situation for $d=2$ and $d=3$. Hence for each fundamental representation of $\sldc$ there is $d$ points each representing an eigenvector ${\psi_j}$.

What about the role played by the shift operators? They have a precise geometric interpretation as well. All points of the $\sldc$ fundamental representations are interconnected. The operator responsible for a transition from site $\ket{\psi_i}$ to $\ket{\psi_j}$ is the operator $E_{ij}$ or $E^\dg_{ij}$ for the opposite direction. It also follows from Eqs.~(\ref{eq:sl2cinChevalley}) that each segment connecting two neighboring spinors has length two.

The fundamental representations of $\sldc$ algebra contain $r=d-1$ independent $\sltw$ subalgebras each satisfying Eqs.~(\ref{eq:sl2C}). However, since the root space diagram is a complete graph there are in total ${d\choose2}$ linearly dependent $\sltw$ subalgebras corresponding to the number of edges. The diagonal generators of the `additional' $\sltw$ subalgebras are constructed similarly to the $H_i$'s above. The only difference is that 1 and $-1$ on the diagonal are separated by one or more zeros. As an example ($d=3$), the remaining diagonal generator is
\begin{equation}\label{eq:H3}
H_3=\begin{pmatrix}
    1 & 0 & 0 \\
    0 & 0 & 0 \\
    0 & 0 & -1 \\
  \end{pmatrix}.
\end{equation}
Therefore, the rank of the weight vectors equals three and it correspond to introducing an overcomplete root basis, see Fig.~\ref{fig:GrassX}b. Note that $H_3$ does not correspond to a simple root. Indeed, the axis $\a_3$ in Fig.~\ref{fig:GrassX}b can be obtained by a linear combination of $\a_1$ and $\a_2$ which are both positive root vectors.

\subsection*{Completely antisymmetric representations of the $\sldc$ algebra}

For the purpose of this article we are interested in particular higher-dimensional representations of $\sldc$ - the completely antisymmetric representations. The lowest-dimensional antisymmetric representation is formed as the dual of the fundamental representation
\begin{equation}\label{eq:antisymspinors}
\psi^{j_1}=\ve^{j_1\dots j_d}\psi_{j_2}\dots\psi_{j_d},
\end{equation}
where $\ve^{j_1\dots j_d}$ is a completely antisymmetric tensor and $1\leq j_1,\dots,j_d\leq d$. The dual representation (of the fundamental representation only!) coincides with its complex conjugate representation. We get the dual representation from the fundamental representation by $G_\sldc\mapsto-\bar{G}_\sldc$ where $G_\sldc$ are all generators of $\sldc$. The eigenvalues remain the same and the weight vectors just change the sign. Hence, the root space diagrams are the same and they are just, vaguely speaking, pointing in the opposite direction. The interpretation of the edges and points follows the fundamental case. The difference lies in the fact that the antisymmetric representations of $\sldc$ are carried by $d$-component antisymmetrized spinors.

Higher-dimensional completely antisymmetric representations of the $\sldc$ algebra are formed in an intuitive way. 
First of all, the dimension of the spaces of roots remains the same as well as the number of algebra generators. The generators clearly satisfy the same commutation relations since it is just a different representation of the same $\sldc$ algebra. The root diagram `is grown' in the direction of roots but this process cannot go on forever. The spinors carrying the higher-dimensional representation are completely antisymmetrized and so there are only ${d\choose k}$ states in the $k$-th completely antisymmetric representation of $\sldc$ where $1\leq k\leq d$ (formally, we should also include the trivial representation, $k=0$).
\begin{figure}[h]
\begin{center}
      \epsfig{file=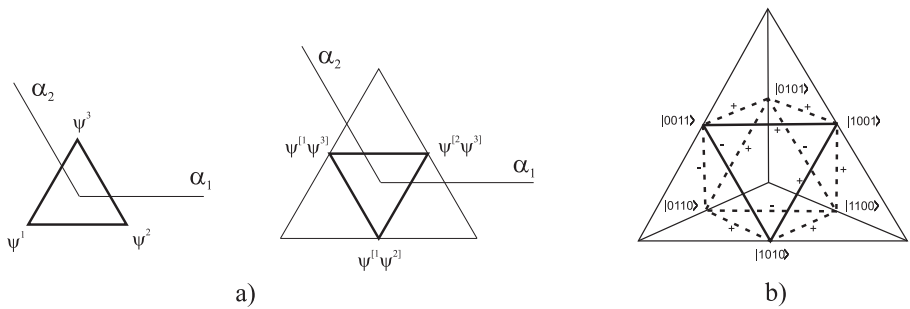,width=12cm}
      \caption{(a) The left plot is the dual representation to the fundamental representation of $\slthr$ where $\psi^i$ are completely antisymmetric spinors from Eq.~(\ref{eq:antisymspinors}). The middle plot is the second-lowest ($k=2$) completely antisymmetric representation of $\slthr$. Due to the antisymmetrization procedure (indicated by the square brackets) the vertices do not correspond to any state. (b) The second lowest ($k=2$) completely antisymmetric representation of $\slfo$. The antisymmetrization procedure leaves us with a six-dimensional space (the inner octahedron) with the spanning basis indicated. The spinors are written in the Fock basis following Definition~\ref{defi:fermstates}. The sign attached to an edge indicates the sign induced by the (fermionic) $\sltw$ shift operator when applied on spinors the edge connects.}
      \label{fig:GrassXXX}
\end{center}
\end{figure}
Fig.~\ref{fig:GrassXXX} illustrates the growth of these representations for $d=3$ and $d=4$. The connection to the fermionic states brought in Definition~\ref{defi:fermstates} is straightforward: The state $\ket{F}=\ket{n_1\dots n_d}$ for which $\sum_{i=1}^dn_i=k$ holds is an antisymmetric spinor carrying the $k$-th~completely antisymmetric representation of $\sldc$.

We find the matrix form of the generators of all higher-dimensional completely antisymmetric $\sldc$ algebra representations. Because of the complete antisymmetry condition there are only segments connecting two neighboring points. Therefore only the lowest-dimensional representation of the $\sltw$ subalgebra appears in the construction. However, because for a given edge there might be more segments parallel to it, the $\sldc$ subalgebra generators are formed by a direct sum of the $\sltw$ subalgebras. The reason for a direct sum is that they, by construction, act on mutually orthogonal subspaces. Finally, for every $d$ there is only $r=d-1$ linearly independent directions (or, said otherwise, only $r=d-1$ independent sets of parallel lines) so the number of generators equals the number of independent $\sltw$ subalgebras and they manifestly satisfy the commutation relations for $\sldc$. Note that it does NOT mean that the completely antisymmetric representations of $\sldc$ are direct sum representations. As a matter of fact, they are irreducible. The direct sum subalgebras we have  created do not themselves span mutually orthogonal subspaces. For illustration see Fig.~\ref{fig:GrassXXX}b where any pair of parallel segments `share' spinors with some other pair of parallel segments.

The only ambiguity lies in the sign of the shift operators. We can see from Eq.~(\ref{eq:sl2C}) that switching their sign does not spoil the commutation relations. Let us illustrate the ambiguity on an example from Fig.~\ref{fig:GrassXXX}b. The transition from state $\ket{0011}$ to $\ket{0101}$ is provided by $a_2^\dg a_3\ket{0011}=\ket{0101}$ meanwhile to get from $\ket{0011}$ to $\ket{0110}$ we acquire a minus sign $a_2^\dg a_4\ket{0011}=-\ket{0110}$. Not accidentally, the operators $a_i^\dg a_j,a_j^\dg a_i$ are the fermionic representation of the shift operators of the $\sltw$ algebra~\cite{liealgebras}. They play an important role in the proof of Theorem~\ref{thm:covariance} where the fermionic representation is properly introduced.

So far we talked about a direct sum of several $\sltw$ subalgebras but for the sake of proof of Theorem~\ref{thm:covariance} we need to specify how many summands there actually is. This transforms into the question how many different $\sltw$ subalgebras corresponding to a chosen direction exist. Let the direction be chosen by the step operators $\{a_2^\dg a_1,a_1^\dg a_2\}$. Then for a given $d$ and $k$ there is ${d-2\choose k-1}$ parallel segments in this particular direction for the $k$-th lowest completely antisymmetric representation of $\sldc$. We get this number by a combinatorial argument: We have a spinor with $d$ positions where the first two slots are occupied. For the lowest-dimensional representation ($k=1$) the rest is occupied by zeros and therefore for the $k$-th lowest dimensional representation there is $(k-1)$ ones to occupy the remaining $(d-2)$ free spaces. By a simple permutation argument we can see that this holds for any of the $(d-1)$ linearly independent directions. For the example in Fig.~\ref{fig:GrassXXX}b we indeed get two parallel segments for each of the three independent directions.

\section{The capacity ratio derivation}\label{app:capratio}

On can show that Eq.~(\ref{eq:equantcap_Grass_rewritten}) simplifies in the following way:
\begin{align}
Q (\mathcal{G}_d) &= \dfrac{1}{d(1+w)^{d+1}} \sum_{k=1}^{d} (w^{d-k} - w^{k-1}) k {d \choose k} \log{k}\\
&= \dfrac{1}{d(1+w)^{d+1}} \left( \sum_{l=0}^{d-1} w^l (d-l) {d \choose d-l} \log{(d-l)} - \sum_{k=1}^{d} w^{k-1} k {d \choose k} \log{k} \right)\\
&= \dfrac{1}{(1+w)^{d+1}} \sum_{k=0}^{d-1} w^k {d-1 \choose k} \log{\dfrac{d-k}{k+1}}\\
&= \dfrac{1}{(1+w)^{d-1}} \sum_{k=0}^{\lfloor \frac{d-1}{2} \rfloor} (w^k - w^{d-1-k}) {d-1 \choose k} \log{\dfrac{d-k}{k+1}} \\
& = \dfrac{1-w}{(1+w)^{d-1}} \sum_{k=0}^{\lfloor \frac{d-1}{2} \rfloor} w^k (1 + w + \cdots w^{d-2-2k}) {d-1 \choose k} \log{\dfrac{d-k}{k+1}} .
\end{align}

Now consider the Unruh channel quantum capacity from Eq.~(\ref{eq:equantcap_Unruh}),
\begin{align}\label{eq:Unruh_Qcapacity}
    Q (\mathcal{U}_d) &= \dfrac{(1-z)^{d+1}}{d} \sum_{k=0}^{\infty} z^{k-1} k {d+k-1 \choose k} \log{\dfrac{d+k-1}{k}}.
\end{align}
Now as $z \rightarrow 1$ the dominant terms in this summation are the terms such that $k \gg 1$. Thus in the limit $z \rightarrow 1$ we can approximate the $\log$ using the following:
\begin{align}\label{eq:approximation}
\log{\dfrac{d+k-1}{k}} = \log{\Big(1+\dfrac{d-1}{k}\Big)} \simeq \dfrac{1}{\ln{d}}\dfrac{d-1}{k}.
\end{align}
Let us recall that $\log$ is the logarithm base $d$ and $\ln$ denotes the natural logarithm. Eq.~(\ref{eq:Unruh_Qcapacity}) becomes:
\begin{align}
Q' (\mathcal{U}_d) &= \dfrac{d-1}{d} \dfrac{(1-z)^{d+1}}{\ln{d}} \sum_{k=1}^{\infty} z^{k-1} {d+k-1 \choose k}\label{eq:approximateQCap}\\
&= \dfrac{d-1}{d} \dfrac{(1-z)^{d+1}}{\ln{d}} \dfrac{1 - (1-z)^d}{z(1-z)^d} \\
&= \dfrac{d-1}{d \ln{d}} \dfrac{1-z}{z} ( 1 - (1-z)^d).
\end{align}
Finally, for $w=z\to1$ the ratio reads
\begin{align}
\lim_{z\to1}\dfrac{Q(\mathcal{G}_d)}{Q(\mathcal{U}_d)}
&= \lim_{z\to1}\dfrac{d \ln{d}}{d-1} \dfrac{z}{(1+z)^{d-1}( 1 - (1-z)^d)} \sum_{k=0}^{\lfloor \frac{d-1}{2} \rfloor} z^k (1 + z + \cdots z^{d-2-2k}) {d-1 \choose k} \log{\dfrac{d-k}{k+1}} \\
&= \dfrac{d \ln{d}}{d-1} \dfrac{1}{2^{d-1}} \sum_{k=0}^{\lfloor \frac{d-1}{2} \rfloor}(d-1-2k) {d-1 \choose k} \log{\dfrac{d-k}{k+1}} .
\end{align}

The following lemma makes sure that the approximation in Eq.~(\ref{eq:approximation}) is valid.
\begin{lem}\label{lem:approx}
    The difference $|Q(\mathcal{U}_d) - Q'(\mathcal{U}_d)|$ approaches zero at a rate $\mathcal{O} \left( (1-z)^2 \right)$ for $z\to1$.
\end{lem}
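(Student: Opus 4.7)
The plan is to isolate the error as a single weighted series and bound it by combining a Taylor-type remainder for the logarithm with a generating-function identity. Subtracting Eq.~(\ref{eq:approximateQCap}) from Eq.~(\ref{eq:Unruh_Qcapacity}) gives
\begin{equation*}
Q(\U_d) - Q'(\U_d) = \frac{(1-z)^{d+1}}{d}\sum_{k=1}^{\infty} z^{k-1}\binom{d+k-1}{k}\,\ve_k,
\end{equation*}
where $\ve_k := k\log\!\bigl(1+\tfrac{d-1}{k}\bigr) - \tfrac{d-1}{\ln d}$ is exactly the discrepancy introduced by approximation~(\ref{eq:approximation}), so all of the error is packaged into this one sum.

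The first step will be to establish a uniform bound $|\ve_k|\le C_d/k$ for all $k\ge 1$. For $k\ge d$, applying the quadratic Taylor remainder $|\ln(1+x)-x|\le x^{2}/(2(1-x))$ with $x=(d-1)/k$ yields $|\ve_k|\le (d-1)^{2}/(2\ln d\,(k-d+1))$, which is clearly $O(1/k)$. The finitely many indices $k<d$ contribute only a bounded, $d$-dependent quantity to $|\ve_k|$; their total contribution, once multiplied by the outer $(1-z)^{d+1}$, is of order $(1-z)^{d+1}=o((1-z)^{2})$ for $d\ge 2$ and can be discarded.

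The second step is to control the auxiliary series $S(z):=\sum_{k\ge 1}\tfrac{z^{k-1}}{k}\binom{d+k-1}{k}$. Since $\tfrac{d}{dz}[zS(z)]$ coincides with the generating function $\tfrac{1-(1-z)^d}{z(1-z)^d}$ already used in Eq.~(\ref{eq:approximateQCap}), I would integrate from $0$ to $z$ to obtain $zS(z)=\int_{0}^{z}\tfrac{1-(1-w)^{d}}{w(1-w)^{d}}\,dw$. Near $w=1$ the integrand behaves like $1/(1-w)^{d}$, so a direct evaluation of the dominant singular part gives $S(z)\sim \tfrac{1}{(d-1)(1-z)^{d-1}}$ as $z\to 1$, i.e.\ $S(z)=O\!\bigl((1-z)^{-(d-1)}\bigr)$.

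Combining the two ingredients yields
\begin{equation*}
|Q(\U_d)-Q'(\U_d)|\;\le\;\frac{C_d}{d}\,(1-z)^{d+1}\,S(z)\;=\;O\!\bigl((1-z)^{2}\bigr),
\end{equation*}
which is the claim. The principal obstacle I anticipate is ensuring that the $O(1/k)$ bound on $\ve_k$ is uniform down to $k=1$, since the Taylor inequality degenerates when $(d-1)/k\ge 1$; the remedy is precisely to split off a finite head of the series and absorb it into the $(1-z)^{d+1}$ prefactor as sketched above, leaving only the tail $k\ge d$ to be estimated by the clean Taylor bound.
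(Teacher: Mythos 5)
Your proof is correct and follows essentially the same route as the paper's: both isolate the error as $\frac{(1-z)^{d+1}}{d}\sum_k z^{k-1}\binom{d+k-1}{k}\ve_k$ with $\ve_k = k\log(1+\tfrac{d-1}{k})-\tfrac{d-1}{\ln d}$, split the series at $k=d$ so the finite head is absorbed into an $O((1-z)^{d+1})$ term, and control the tail by the quadratic Taylor remainder, reducing everything to showing that the $1/k$-weighted binomial series gains one power of $(1-z)$ over the unweighted generating function. The only divergence is in that last step: the paper uses the pointwise inequality $k/(d+k-1)\ge d/(2d-1)$ for $k\ge d$ to replace $\binom{d+k-1}{k}/k$ by a constant multiple of $\binom{d+k-2}{k}$ and sums the resulting series in closed form, whereas you integrate $\tfrac{d}{dz}[zS(z)]=\tfrac{1-(1-z)^d}{z(1-z)^d}$ to get $S(z)=O\big((1-z)^{-(d-1)}\big)$ directly; both give the claimed $O\big((1-z)^{2}\big)$.
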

\begin{proof}
    We can split Eq.~(\ref{eq:Unruh_Qcapacity}) to approximate the logarithm,
    \begin{align}
    Q (\mathcal{U}_d) &= \dfrac{(1-z)^{d+1}}{d} \left[ \sum_{k=1}^{d-1} z^{k-1} k {d+k-1 \choose k} \log{\dfrac{d+k-1}{k}} + \sum_{k=d}^{\infty} z^{k-1} k {d+k-1 \choose k} \log{\dfrac{d+k-1}{k}} \right].
    \end{align}
    For $k \ge d$ we can Taylor expand the logarithm,
    \begin{align}
    \log{ \dfrac{d+k-1}{k} } = \dfrac{1}{\ln{d}} \ln{\left( 1 + \dfrac{d-1}{k} \right)} = \dfrac{1}{\ln{d}} \left[ \dfrac{d-1}{k} - \dfrac{1}{2}\left(\dfrac{d-1}{k}\right)^2 + \dfrac{1}{3}\left(\dfrac{d-1}{k}\right)^3 - \cdots \right].
    \end{align}
    Since this is a converging alternating series, we can bound the following quantity for $k \ge d$,
    \begin{align}\label{eq:differencebound}
    \left| \dfrac{d-1}{k \ln{d} } -  \log{ \left(1+\dfrac{d-1}{k}\right) } \right| = \dfrac{1}{\ln{d}} \left[ \dfrac{1}{2}\left(\dfrac{d-1}{k}\right)^2
    -\dfrac{1}{3}\left(\dfrac{d-1}{k}\right)^3 + \cdots \right] \le \dfrac{1}{2 \ln{d} }\left(\dfrac{d-1}{k}\right)^2.
    \end{align}
    Consider the following approximation of the quantum capacity (rewritten Eq.~(\ref{eq:approximateQCap}))
    \begin{align}
    Q' (\mathcal{U}_d) &= \dfrac{(1-z)^{d+1}}{d} \left[ \sum_{k=1}^{d-1} z^{k-1} k {d+k-1 \choose k} \dfrac{d-1}{k \ln{d}} + \sum_{k=d}^{\infty} z^{k-1} k {d+k-1 \choose k} \dfrac{d-1}{k \ln{d} } \right].
    \end{align}
    The difference between $Q$ and $Q'$ can be bounded as follows:
    \begin{align}
    |Q(\mathcal{U}_d) - Q'(\mathcal{U}_d)| &\le \dfrac{(1-z)^{d+1}}{d} \sum_{k=1}^{d-1} z^{k-1} k {d+k-1 \choose k} \left| \log{\dfrac{d+k-1}{k} } - \dfrac{d-1}{k \ln{d}}\right| \nonumber \\
    & \qquad +  \dfrac{(1-z)^{d+1}}{d} \sum_{k=d}^{\infty} z^{k-1} k {d+k-1 \choose k} \left| \log{\dfrac{d+k-1}{k} } -\dfrac{d-1}{k \ln{d}} \right| \\
    & \le \dfrac{(1-z)^{d+1}}{d\ln{d}} \sum_{k=1}^{d-1} z^{k-1} k {d+k-1 \choose k} \dfrac{d-1}{k} \nonumber \\
    & \qquad +  \dfrac{(1-z)^{d+1}}{d\ln{d}} \sum_{k=d}^{\infty} z^{k-1} k {d+k-1 \choose k} \dfrac{1}{2} \left( \dfrac{d-1}{k} \right)^2 \label{eq:differencebound_applied}\\
    & =  \dfrac{(d-1)(1-z)^{d+1}}{d\ln{d}} \sum_{k=1}^{d-1} z^{k-1}  {d+k-1 \choose k} \nonumber \\
    & \qquad +  \dfrac{(d-1)^2 (1-z)^{d+1}}{2d\ln{d}} \sum_{k=d}^{\infty} z^{k-1}  {d+k-1 \choose k}   \dfrac{1}{k}\\
    & \le  \dfrac{(d-1)(1-z)^{d+1}}{d\ln{d}} \sum_{k=0}^{d-1}  {d+k-1 \choose k} \nonumber \\
    & \qquad +  \dfrac{(d-1)^2 (1-z)^{d+1}}{2d\ln{d}} \sum_{k=d}^{\infty} z^{k-1}  {d+k-1 \choose k}  \dfrac{1}{k} \dfrac{k}{d+k-1} \dfrac{2d-1}{d} \label{eq:multiply} \\
    &=  \dfrac{(d-1)(1-z)^{d+1}}{d\ln{d}} {2d-1\choose d} \nonumber\\
    &\qquad+  \dfrac{(d-1)^2 (2d-1) (1-z)^{d+1}}{2d^2\ln{d}} \sum_{k=d}^{\infty} z^{k-1}  {d+k-2 \choose k} \\
    &\le  \dfrac{(d-1)(1-z)^{d+1}}{d\ln{d}}  {2d-1\choose d} \nonumber\\
    & \qquad+  \dfrac{(d-1)^2 (2d-1) (1-z)^{d+1}}{2d^2\ln{d}} \sum_{k=1}^{\infty} z^{k-1}  {d+k-2 \choose k} \\
    &= \dfrac{(d-1)(1-z)^{d+1}}{d\ln{d}}  {2d-1\choose d}\nonumber\\
    &\qquad +  \dfrac{(d-1)^2 (2d-1) (1-z)^{d+1}}{2d^2\ln{d}}  \left( \dfrac{1}{z(1-z)^{d-1}} \right) \\
    &= \mathcal{O} \left( (1-z)^{d+1} \right) + \mathcal{O} \left( (1-z)^2 \right),
    \end{align}
    where in Eq.~(\ref{eq:differencebound_applied}) we applied Eq.~(\ref{eq:differencebound}) and in Eq.~\eqref{eq:multiply} we have used that $\dfrac{k}{d+k-1} \ge \dfrac{d}{2d-1} $ for $k \ge d$.
\end{proof}

\end{document}